\theoremstyle{definition}
\theoremstyle{plain}\newtheorem{Th}{Theorem}
\theoremstyle{definition}
\theoremstyle{definition}
\theoremstyle{plain}
\theoremstyle{plain}\newtheorem{Co}[Th]{Corollary}
\theoremstyle{plain}\newtheorem{Lm}[Th]{Lemma} \textwidth 155mm
\begin{document}
\title{{\bf Two-tape finite automata with quantum and classical states}}

\author{Shenggen Zheng$^{a,}$\thanks{{\it  E-mail
address:} zhengshenggen@gmail.com},\hskip 2mm Lvzhou Li$^{a,}$\thanks{{\it  E-mail address:} lilvzhou@gmail.com}, \hskip
2mm Daowen Qiu$^{a,b,c,}$\thanks{Corresponding author. {\it E-mail address:}
issqdw@mail.sysu.edu.cn (D. Qiu).} \\
\small{{\it $^a$Department of
Computer Science, Sun Yat-sen University, Guangzhou 510006, China}}\\
\small{{\it $^b$SQIG--Instituto de Telecomunica\c{c}\~{o}es, IST,
TULisbon, }}\\
\small{{\it  Av. Rovisco Pais 1049-001, Lisbon, Portugal}}\\
{\small {\it $^{c}$The State Key Laboratory of Computer Science,
Institute of Software,}}\\
 {\small {\it Chinese  Academy of Sciences,
 Beijing 100080, China}}}

\date{ }
\maketitle \vskip 2mm \noindent {\bf Abstract}
\par
{\it Two-way finite automata with quantum and classical states}
(2QCFA) were introduced by Ambainis and Watrous, and {\it two-way
two-tape deterministic finite automata} (2TFA) were introduced by
Rabin and Scott. In this paper we study 2TFA and propose a new
computing model called {\it two-way two-tape finite automata with
quantum and classical states} (2TQCFA). First, we give efficient
2TFA algorithms for recognizing languages which can be recognized
by 2QCFA. Second, we give efficient 2TQCFA algorithms to recognize
several languages whose status vis-a-vis 2QCFA have been posed as
open questions, such as $L_{square}=\{a^{n}b^{n^{2}}\mid n\in
\mathbf{N}\}$. Third, we show that $\{a^{n}b^{n^{k}}\mid n\in
\mathbf{N}\}$ can be recognized by {\it $(k+1)$-tape deterministic
finite automata} ($(k+1)$TFA). Finally, we introduce {\it $k$-tape
automata with quantum and classical states} ($k$TQCFA) and prove
that $\{a^{n}b^{n^{k}}\mid n\in \mathbf{N}\}$ can be recognized by
$k$TQCFA.

\par
\vskip 2mm {\sl Keywords:}   Quantum computing models; Quantum
finite automata;

\vskip 2mm

\section{Introduction}
Interest in quantum computation has steadily increased since
Shor's quantum algorithm for factoring integers in polynomial time
\cite{Shor} and Grover's algorithm of searching in database of
size $n$ with only $O(\sqrt{n})$ accesses \cite{Grover}. As we
know, these algorithms are based on {\it quantum Turing machines}
which are complicated to implement using today's experiment
technology. Therefore, it is natural to consider  much more
restricted quantum computing models.

Classically, as one of the simplest computing models, {\it
deterministic finite automata} (DFA) and {\it nondeterministic
finite automata} (NFA) have been deeply studied \cite{Hop}.
Furthermore, {\it two-way two-tape deterministic finite automata}
(2TFA) and {\it multi-tape finite automata} ($m$TFA) were first
introduced by Rabin and Scott in a seminal paper in 1959
\cite{RS}. Unlike DFA or NFA, 2TFA can recognize not only {\it
regular languages} (RL) but also some {\it context-free languages}
(CFL), and even some  {\it no-context-free languages} (NCFL). In
this paper, we will show that some {\it context-free languages},
and even some {\it no-context-free languages}  can be recognized
by 2TFA in linear time.

Correspondingly, it may be interesting to consider restricted {\it
quantum Turing machines}, such as {\it quantum finite automata}
(QFA).  QFA were first introduced independently by Kondacs and
Watrous \cite{Kon97}, as well as Moore and Crutchfield
\cite{Moore}. As a quantum variant of FA, QFA have attracted wide
attentions in the academic community
\cite{Amb06,Amb-F,Nay,Bro,LiQ2,LiQ3,Qiu3,Qiu4}. Many kinds of QFA
have been proposed and studied(e.g., see \cite{LiQ4}). QFA are
mainly divided into two kinds: {\it one-way quantum finite
automata} (1QFA) and {\it two-way quantum finite automata} (2QFA).
If we compare 1QFA with their classical counterparts DFA, 1QFA
have some weaknesses since they can recognize only a proper subset
of RL \cite{Amb-F}. 2QFA, however, are more powerful than their
classical counterparts {\it two-way deterministic finite automata}
(2DFA) which recognize only RL. 2QFA were first introduced by
Kondacs and Watrous, and then they proved that
$L_{eq}=\{a^{n}b^{n}\mid n\in \mathbf{N}\}$ and
$L_{trieq}=\{a^{n}b^{n}c^{n}\mid n\in \mathbf{N}\}$ can be
recognized by 2QFA \cite{Kon97}, where the first language is CFL,
and the second one is NCFL. However, 2QFA have  a disadvantage in
the sense that we need at least $\mathbf{O}(\log n)$ qubits to
store the position of the tape head, which is relative to the
length of the input. In order to conquer this disadvantage,
Ambainis and  Watrous proposed {\it two-way finite automata with
quantum and classical states} (2QCFA)\cite{AJ}, an intermediate
model between 1QFA and 2QFA, which are still more powerful than
their classical counterparts 2DFA. A 2QCFA is essentially a
classical 2DFA augmented with a quantum component of constant
size, where the dimension of the associated Hilbert space does not
depend on the length of the input.  Ambainis and  Watrous showed
$L_{pal}=\{\omega\in\{a,b\}^{*}\mid\omega=\omega^{R}\}$ and
$L_{eq}=\{a^{n}b^{n}\mid n\in \mathbf{N}\}$ can be recognized by
2QCFA \cite{AJ}. Several other no-regular languages including
$M_{eq}=\{a^{n}b_{1}^{n}a^{m}b_{2}^{m}\mid n,m\in \mathbf{N}\}$,
$L_{eq}(k,a)=\{a^{kn}b^{n}\mid n\in \mathbf{N}\}$ and
$L_{=}=\{\omega\in \{a,b\}^{*}\mid\#_{\omega}(a)=\#_{\omega}(b)\}$
were proved to be recognized by 2QCFA in Qiu's paper \cite{Qiu2}.

In this paper we  revisit {\it two-tape two-way deterministic
finite automata} (2TFA), of which the definition has a little
difference from the original model introduced  by Rabin and Scott
in 1959 \cite{RS}. When we mention 2TFA in the following, we refer
to the new defined model without ambiguity. First, we prove that
all the languages mentioned above that can be recognized by 2QFA
or 2QCFA, including $L_{pal}$, $L_{eq}$, $L_{trieq}$, $M_{eq}$,
$L_{eq}(k,a)$ and $L_{=}$,  can also be recognized by 2TFA in
linear time. Note that 2QCFA need polynomial or exponential time
to recognize these languages. Second,  We show some languages,
including $L_{copy}=\{\omega\omega\mid\omega\in\Sigma^{*}\}$,
$L_{middle}=\{xay\mid x,y\in\Sigma^{*},a\in\Sigma\}$, and
$L_{balanced}=\{x\in\{``(",``)"\}^*\mid$ parentheses in $x$ are
balanced $\}$,  can be recognized by 2TFA in linear time. However,
as far as we know, it is still not clear whether 2QFA or  2QCFA
can recognize these languages or not. Thus, it seems that 2TFA are
more powerful than 2QFA or 2QCFA.

Based on 2TFA, we propose 2TQCFA which are similar to 2QCFA, only
augmented with another tape. 2TQCFA  can be implemented with a
quantum part of constant size, and they are shown to be powerful.
We consider the language: $L_{square}=\{a^{n}b^{n^{2}}\mid n\in
\mathbf{N}\}$, whether it can be recognized by 2QFCA or not is
still an open problem proposed by  Ambainis and Watrous in
\cite{AJ}. Rao proposed a new model of automata called {\it
Two-Way Optical Interference Automata} (2OIA) which use the
phenomenon of interference, and proved that they can recognize
$L_{square}$ \cite{MV}. However, 2OIA are complicated to
implement, because it is hard to control  the phenomenon of
interference in high precision. In this paper, we show that
$L_{square}$ can be recognized by 2TQCFA in polynomial time.

Also, we consider {\it multi-tape deterministic finite automata}
($m$TFA). We show that $L_{kpow}=\{a^{n}b^{n^{k}}\mid n\in
\mathbf{N}\}$ can be recognized by {\it $(k+1)$-tape deterministic
finite automata} ($(k+1)$TFA) in linear time. Furthermore, we
propose a new model of quantum automata, {\it multi-tape finite
automata with quantum and classical states} ($m$TQCFA), which is
essentially the model of
 $m$TFA augmented with a quantum component of constant
size. We prove that $L_{kpow}$ can be recognized by {\it $k$-tape
finite automata with quantum and classical states} ($k$TQCFA) in
polynomial time.

The remainder of this paper has the following organization. In
Section 2 we give the definitions of 2TFA, 2TQCFA, $m$TFA and
$m$TQCFA. Then, in Section 3 we show 2TFA can recognize those
languages mentioned before. Afterwards, in Section 4 we describe a
2TQCFA  for recognizing $L_{square}$. Subsequently, in Section 5
we describe a $(k+1)$TFA and a $k$TQCFA for recognizing
$L_{kpow}$. Finally we make a conclusion in Section 6.

\section{Definitions}
\subsection{Definition of 2TFA}
The definition here is slightly different from the original model
introduced by Rabin and Scott \cite{RS}. A 2TFA is defined by a
6-tuple
\begin{equation}
\mathcal{M}=(S,\Sigma,\delta,s_{0},S_{acc},S_{rej})
\end{equation}
where,
\begin{itemize}
\item $S$ is a finite set of classical states;

\item $\Sigma$ is a finite set of input symbols; the tape symbol
set is $\Gamma=\Sigma\cup\{\ |\hspace{-2.0mm}C,\$\}$; two tapes
with the same input $\omega$ are $\mathcal{T}_{1}=
|\hspace{-2.0mm}C_{1}\omega\$_{1}$ and
$\mathcal{T}_{2}=|\hspace{-2.0mm}C_{2}\omega\$_{2}$; we assume
that the input in $\mathcal{T}_{i}$ is started by the left
end-marker $\ |\hspace{-2.0mm}C_{i}$ and terminated by the rihgt
end-marker $\$_{i}$;

\item $\delta$ is the transition function:
\begin{equation}
S\setminus(S_{acc}\cup S_{rej})\times\Gamma\rightarrow S\times
D_{1} \times D_{2}
\end{equation}
where $D_{1}=D_{2}=\{\leftarrow, \downarrow, \rightarrow\}$, of
which the element in turn means left, stationary and right,
respectively; we restrict that at least one of $D_{1}$ and $D_{2}$
is stationary in this mapping, which means that a 2TFA can at most
move one head and scan one new symbol at one time. If both of the
tape heads are stationary, the symbol to be scanned is the same as
the previous step.  We assume that the first symbol scanned is \
$|\hspace{-2.0mm}C_{1}$. So we can simplify the transition
function to take ``$\Gamma$" as variable instead of \
``$\Gamma\times\Gamma$", which is the main difference from the
original model introduced by Rabin and Scott. The mapping can be
interpreted as follows: $\delta(s, \gamma)=(s', d_{1}, d_{2})$
means a machine currently in state $s\in S\setminus(S_{acc}\cup
S_{rej})$, scanning symbol $\gamma\in \Gamma$ will change state to
$s'$, and the tape head $i$ will move according to the directions
$d_{1}$ and $d_{2}$;

\item $s_{0}\in S$ is the initial state of the machine;

\item $S_{acc}\subset S$ and $S_{rej}\subset S$ are the sets of
accepting states and rejecting states, respectively.
\end{itemize}
For an input string $\omega=\sigma_{1}\dots\sigma_{l}\in\Sigma^*$,
 the content of the first tape is $\mathcal{T}_1=\
|\hspace{-2.0mm}C_{1} \sigma_{1}\dots\sigma_{l}\$_1$, and the
content of the second tape is $\mathcal{T}_2=\
|\hspace{-2.0mm}C_{2} \sigma_{1}\dots\sigma_{l}\$_2$. The
computation of a 2TFA on input string $\omega$ is the sequence
$(p_0, i^1_0, i^2_0), \dots$, $(p_j,i^1_j,i^2_j), \dots$, in which
$p_j\in S$, $0\leq i^1_j,i^2_j\leq l+1$, and

\begin{itemize}
\item The initial configuration $(p_0,i^1_0,i^2_0)=(s_0,0,0)$;

\item If the instance configuration is $(p_j,i^1_j,i^2_j)$ and the
transition function is $\delta(p_{j},\sigma)=(p_{j+1},
d^{1}_{j},d^2_{j})$, then the next configuration will be
$(p_{j+1}, i^1_j+d^{1}_{j},i^2_j+d^2_{j} )$.

\end{itemize}
A computation is assumed to halt if and only if an accepting state
or a rejecting state is reached. If there is a configuration
$(p_j,i^1_{j},i^2_{j})$ with $p_j\in S_{acc}$, then the automaton
is said to accept $\omega$; else if $p_j\in S_{rej}$, then the
automaton is said to reject $\omega$.

For related basic notations and further details about {\it finite
automata}, the reader may refer to \cite{Hop,MS,SY}.

\subsection{Definition of 2TQCFA}

2QCFA were introduced by  Ambainis and  Watrous in \cite{AJ}. The
new model that we define here is similar to 2QCFA, but with two
tapes. Informally, we describe a 2TQCFA as a 2TFA which has access
to a constant size of quantum register, upon which it perform
quantum transformations and measurements. Before giving a more
formal definition of 2TQCFA, we review a few basic facts regarding
quantum computing. We would refer the reader to
\cite{Gru99,JP,Nie} for a more detailed overview of quantum
computing. Let $\mathcal{H}(Q)$ represents the Hilbert space with
the corresponding base identified with set $Q$. If
$\{|i_{Q}\rangle\}$ is an orthonormal base for $\mathcal{H}(Q)$,
then state $q\in \mathcal{H}(Q)$ can be written in the form
$\sum^{n}_{i}a_{i}|i_{Q}\rangle$, where $a_{i}$ is a complex
number and $\sum^{n}_{i}|a_{i}|^{2}=1$. Let
$\mathcal{U}(\mathcal{H}(Q))$ and $\mathcal{M}(\mathcal{H}(Q))$
denote the sets of
 unitary operators and orthogonal measurements over $\mathcal{H}(Q)$, respectively.  A unitary operator $U\in \mathcal{U}(\mathcal{H}(Q))$  is an invertible linear operator which preserves length.  A set of {\it projective measurements}
$\mathcal{M}=\{P_{i}\}\in \mathcal{M}(\mathcal{H}(Q))$ are {\it
Hermitian operators} on $\mathcal{H}(Q)$ such that
$P_{i}=P_{i}^{\dagger}$, $P_{i}^{2}=P_{i}^{\dagger}$ for all $i$;
$P_{i}P_{j}=0$ for $i\neq j$, and $\Sigma_{i}P_{i}=I$. Upon
measuring the state $|\psi\rangle$, the probability of getting
result $i$ is given by $p(i)=\langle\psi|P_{i}|\psi\rangle$. Given
that the result $i$ occurred, the state of the quantum system
immediately changes to
$P_{i}|\psi\rangle/\sqrt{\langle\psi|P_{i}|\psi\rangle}$.

A 2TQCFA is specified by a 9-tuple
\begin{equation}
\mathcal{M}=(Q,S,\Sigma,\Theta,\delta,q_{0},s_{0},S_{acc},S_{rej})
\end{equation}
where,
\begin{itemize}
\item $Q$ is a finite set of quantum states;

\item $S$ is a finite set of classical states;

\item $\Sigma$ is a finite set of input symbols; the tape symbol
set is $\Gamma=\Sigma\cup\{\ |\hspace{-2.0mm}C,\$\}$; two tapes
with the same input $\omega$ are
$\mathcal{T}_{1}=|\hspace{-2.0mm}C_{1}\omega\$_{1}$ and
$\mathcal{T}_{2}=|\hspace{-2.0mm}C_{2}\omega\$_{2}$; we assume
that the input in $\mathcal{T}_{i}$ is started by the left
end-marker $\ |\hspace{-2.0mm}C_{i}$ and terminated by the right
end-marker $\$_{i}$;

\item $\Theta$ is the transition function of quantum states:
\begin{equation}
S\setminus(S_{acc}\cup S_{rej})\times \Gamma\rightarrow
\mathcal{U}(\mathcal{H}(Q))\cup \mathcal{M}(\mathcal{H}(Q)),
\end{equation}
where each action $\Theta(s,\gamma)$ corresponds to either a
unitary transformation or a projective measurement;

\item $\delta$ is the transition function of classical states. If
$\Theta(s, \gamma)\in \mathcal{U}(\mathcal{H}(Q))$, then $\delta$
is:
\begin{equation}
S\setminus(S_{acc}\cup S_{rej})\times \Gamma\rightarrow S\times
D_{1}\times D_{2},
\end{equation}
which is the same transition function as we have defined in 2TFA.
If $\Theta(s, \gamma)\in \mathcal{M}(\mathcal{H}(Q))$, and the
resulting set of the measurement is $R=\{r_{1},r_{2},\dots$,
$r_{n}\}$, then
 $\delta$ is:
\begin{equation}
S\setminus(S_{acc}\cup S_{rej})\times \Gamma \times R\rightarrow
S\times D_{1}\times D_{2},
\end{equation}
 where $\delta(s,\gamma)(r_{i})=(s',d_{1},d_{2})$ means
that when the projective measurement result is $r_{i}$ with
probability $p_{i}$, then $s\in S$ scanning $\gamma\in \Gamma$
will be changed to state $s'$, and the next scanned symbol will be
decided by $d_{1}$ and $d_{2}$;

\item $q_{0}\in Q$ is the initial quantum state;

\item $s_{0}\in S$ is the initial classical state;

\item $S_{acc}\subset S$ and $S_{rej}\subset S$ are the sets of
classical accepting states and rejecting states, respectively.
\end{itemize}

Given an input $\omega$, a 2TQCFA
$\mathcal{M}=(Q,S,\Sigma,\Theta,\delta,q_{0},s_{0},S_{acc},S_{rej})$
operates as follows:

At the beginning, tape head $i$ is at \ $|\hspace{-2.0mm}C_{i}$,
the quantum initial state is $|q_{0}\rangle$, the classical
initial state is $s_{0}$, and $|q_{0}\rangle$ will be changed
according to $\Theta(s_{0},\ |\hspace{-2.0mm}C_{1})$.
\begin{itemize}

\item [a.] If $\Theta(s_{0},\ |\hspace{-2.0mm}C_{1})=U\in
\mathcal{U}(\mathcal{H}(Q))$, then the quantum state evolves as
$|q_{0}\rangle\rightarrow U|q_{0}\rangle$, and meanwhile, the
classical state $s_{0}$ will be changed to $s'$ according to
$\delta(s_{0},\ |\hspace{-2.0mm}C_{1})=(s',d_{1},d_{2})$. The
symbol scanned next is in $\mathcal{T}_{1}$, if
$d_1\neq\downarrow$; the symbol scanned next is in
$\mathcal{T}_{2}$, if $d_2\neq\downarrow$; otherwise, it holds
that $d_1=d_{2}=\downarrow$, and then the symbol scanned next is
unchanged.

\item [b.] If $\Theta(s_{0},\ |\hspace{-2.0mm}C_{1})=M\in
\mathcal{M}(\mathcal{H}(Q))$, $M=\{P_1, \dots, P_m\}$, then
$|q_{0}\rangle\rightarrow \frac{P_{i}|q_{0}\rangle}{\sqrt{\langle
q_{0}|P_{i}|q_{0}\rangle}}$ with result $r_i$. Meanwhile, we have
$\delta(s_{0},\
|\hspace{-2.0mm}C_{1})(r_{i})=(s_{i},d_{1},d_{2})$. If $s_{i}\in
S_{acc}$, $\mathcal{M}$ accepts $\omega$; if $s_{i}\in S_{rej}$,
$\mathcal{M}$ rejects $\omega$; otherwise, $\mathcal{M}$ scans the
next symbol, and the transformation is similar to the above
process.
\end{itemize}
A computation is assumed to halt if and only if an accepting state
or a rejecting classical state is reached.  Let $p_{acc}(\omega)$
and $p_{rej}(\omega)$ denote the accepting and rejecting
probabilities for $\mathcal{M}$ with input $\omega$, respectively.
 We say that a language
$L$ is recognized by a 2TQCFA $\mathcal{M}$ if $L$ is recognized
by a 2TQCFA $\mathcal{M}$ with one-sided error $\epsilon$, which
means that for all $\omega\in \Sigma^{*}$,
$p_{acc}(\omega)+p_{rej}(\omega)=1$, $p_{acc}(\omega)=1$ if
$\omega\in L$, and $p_{rej}(\omega)\geq 1-\epsilon$ if $\omega
\notin L$.

\subsection{Definitions of $m$TFA and $m$TQCFA}

The definition of $k$TFA is similar to that of 2TFA. A $k$TFA is a
6-tuple
\begin{equation}
M=(S,\Sigma,\delta,s_{0},S_{acc},S_{rej})
\end{equation}
where $S,\Sigma,s_{0},S_{acc}$ and $S_{rej}$ are just like the
ones in 2TFA. The map of $\delta$ is: $S\setminus(S_{acc}\cup
S_{rej})\times\Gamma\rightarrow S\times D^{k}$. We restrict that
at most one of $D_{i}(i=1,2,\dots,k)$ is no stationary in the
mapping and all the tapes have the same input $\omega$. The
transformation is similar to that of  2TFA.

An $m$TQCFA  is essentially a classical $m$TFA augmented with a
quantum component of constant size.

\section{Languages recognized by 2TFA}

When we use a 2DFA to recognize a language, we can not record how
many specific symbols have been scanned. For example, when we use
a 2DFA to recognize $L_{eq}=\{a^{n}b^{n}\mid n\in \mathbf{N}\}$,
we can not record how many symbols $a$ have been scanned, which is
the reason that 2DFA can not recognize $L_{eq}$. Even if a 2TFA is
used, we are still not able to record how many specific symbols
have been scanned.
 But we have two tapes, so we can compare the numbers of the
specific symbols having been scanned between the two tapes, which
leads to that 2TFA can recognize more languages and become more
powerful.

\subsection{Languages recognized by 2QFA or 2QCFA that can also be recognized by 2TFA}
$L_{eq}$ and $L_{trieq}$ were proved to be recognized by 2QFA in
\cite{Kon97},  $L_{pal}$ and $L_{eq}$ were proved to be recognized
by 2QCFA in \cite{AJ}, and $M_{eq}$, $L_{eq}(k,a)$ and $L_{=}$
were proved to be recognized by 2QCFA in \cite{Qiu2}. In this
subsection we will prove all these languages can be recognized by
2TFA in linear time. We assume that
$\mathbf{N}=\{0,1,2,\dots,n,\dots\}$, and we use
$\#_{\omega}(\sigma)$ and $|\omega|$  to denote the number of
symbol $\sigma$ in string $\omega$ and the length of string
$\omega$, respectively.
\begin{Th}
$L_{eq}=\{a^{n}b^{n}\mid n\in \mathbf{N}\}$ can be recognized by
2TFA in linear time.
\end{Th}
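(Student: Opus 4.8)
The plan is to construct an explicit 2TFA that compares the number of $a$'s counted on one tape against the number of $b$'s counted on the other tape, using the two-tape structure to overcome the inability of a single head to remember counts. Since the machine cannot store $n$ in its finite state, the idea is to exploit that both tapes carry the same input $\omega$: I can march the head of $\mathcal{T}_1$ across the block of $a$'s while simultaneously (in alternating stationary/moving steps, as forced by the restriction that at most one head moves per step) marching the head of $\mathcal{T}_2$ across the block of $b$'s, so that the two heads advance in lockstep and the counts are matched position-for-position.

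Concretely, I would first verify the gross format of the input with a preliminary scan: the string must be of the form $a^\ast b^\ast$ (some number of $a$'s followed by some number of $b$'s, no interleaving), which a 2TFA checks in linear time by reading one tape left to right and rejecting on any $b$-then-$a$ violation. Assuming the format is correct, the main phase positions the $\mathcal{T}_1$ head on the first $a$ and the $\mathcal{T}_2$ head on the first $b$; I then repeatedly advance $\mathcal{T}_1$ by one symbol while it still scans $a$ and, on the interleaved step, advance $\mathcal{T}_2$ by one symbol while it still scans $b$. The classical states record whose turn it is to move and whether the current symbols are the expected letters. The machine accepts precisely when the $\mathcal{T}_1$ head reaches $\$_1$ (the $a$-block is exhausted) at the same moment that the $\mathcal{T}_2$ head reaches $\$_2$ (the $b$-block is exhausted); if one head hits its end-marker while the other still scans its letter, the counts differ and the machine rejects. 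The edge cases $n=0$ (empty input, which lies in $L_{eq}$) must be handled by an initial check.

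The subtle point — and the step I expect to require the most care — is the bookkeeping forced by the transition restriction that in any single step at most one of the two heads moves. This means the two heads cannot literally advance simultaneously; instead I must interleave their motions (move $\mathcal{T}_1$, then move $\mathcal{T}_2$, then repeat) and encode the ``phase'' of this interleaving in the finite classical state set, taking care that the head of $\mathcal{T}_2$ is correctly pre-positioned at the first $b$ before the matching loop begins (this pre-positioning itself costs a linear-time sweep). I would organize the state set $S$ into a small number of groups — one for the format check, one for positioning the second head, one for each phase of the lockstep comparison, plus the accept and reject states — and argue that each group contributes only a constant number of states, so $S$ is finite as required.

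Finally, for the linear-time claim I note that each of the three stages (format verification, positioning the $\mathcal{T}_2$ head, and the lockstep comparison) performs a bounded number of passes in which every step advances a head by one cell or is an interleaving step paired with such an advance; hence the total number of transitions is $\mathbf{O}(n)$, giving linear running time. The correctness argument reduces to the observation that the comparison loop terminates in an accepting state if and only if the number of $a$'s equals the number of $b$'s, which is exactly the membership condition for $L_{eq}$.
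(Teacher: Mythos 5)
Your proposal is correct and follows essentially the same strategy as the paper: a preliminary format check for $a^{+}b^{+}$, then an interleaved two-head walk (with the interleaving phase stored in the finite control, as the model forces) that pairs off one letter-block against the other and accepts iff both blocks are exhausted at the same moment, all in linearly many steps. The differences are cosmetic --- the paper pairs the $b$-block of $\mathcal{T}_1$, scanned right-to-left from $\$_{1}$ where the head already sits after the format check, against the $a$-block of $\mathcal{T}_2$, which saves your extra repositioning sweep; also note that in your left-to-right variant the exit signal on $\mathcal{T}_1$ is the first $b$ rather than $\$_{1}$ whenever the $b$-block is nonempty.
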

\begin{proof}The idea of this proof is as follows: First we check whether
the input $\omega$ is of the form $a^{+}b^{+}$, and then we scan
$\mathcal{T}_1$ from right to left to consume symbol $b$ and scan
$\mathcal{T}_2$ from left to right to consume symbol $a$
alternately. If we scan a symbol $b$ in $\mathcal{T}_1$, the next
symbol scanned in $\mathcal{T}_2$ must be symbol $a$; otherwise,
we reject the input $\omega$. Once we scan a symbol $a$ in
$\mathcal{T}_1$, which means that the tape head of $\mathcal{T}_1$
reaches the $a^+$ section of $\mathcal{T}_1$, we turn to
$\mathcal{T}_2$ to see which symbol it scans. If the next symbol
scanned in $\mathcal{T}_2$ is symbol $b$, we accept the input
$\omega$; otherwise, we reject it. A 2TFA $\mathcal{M}$ for
$L_{eq}$ is defined as follows:
\begin{equation}
\mathcal{M}=(S,\Sigma,\delta,s_{0},S_{acc},S_{rej})
\end{equation}
where,
\begin{itemize}

\item $S=\{s_0,s_1,s_2,s_a,s_b,s_{crt},s_{acc},s_{rej}\}$, where
$s_0,s_1$ and $s_2$ are used to check whether $\omega$ is of the
form $a^{+}b^{+}$; $s_a$ is used to consume symbol $a$ in
$\mathcal{T}_2$ and $s_b$ is used to consume symbol $b$ in
$\mathcal{T}_1$; $s_{crt}$ is the critical state which  will be
changed to $s_{acc}$ or $s_{rej}$ depending on the next scanned
symbol; \item $\Sigma=\{a,b\}$; $s_{0}$ is the initial state;
$S_{acc}=\{s_{acc}\}$; $S_{rej}=\{s_{rej}\}$;

 \item The mapping $\delta$ is defined as follows:\\

\begin{tabular}{lll}
  $\delta(s_0,\ |\hspace{-2.0mm}C_1)=(s_0, \rightarrow,\downarrow)$
  &$\delta(s_0, a)=(s_1, \rightarrow, \downarrow)$&{}\\
  $\delta(s_0, b)=(s_{rej}, -, -)$
  &$\delta(s_0, \$_1)=(s_{acc}, -, -)$&{}\\
  $\delta(s_1, a)=(s_1, \rightarrow, \downarrow)$
  &$\delta(s_1, b)=(s_2, \rightarrow, \downarrow)$
  &$\delta(s_1, \$_1)=(s_{rej}, -, -)$\\
  $\delta(s_2, a)=(s_{rej}, -, -)$
  &$\delta(s_2, b)=(s_2, \rightarrow, \downarrow)$\
  &$\delta(s_2, \$_1)=(s_{b}, \leftarrow, \downarrow)$\\
  $\delta(s_b, a)=(s_{crt}, \downarrow, \rightarrow)$
  &$\delta(s_b, b)=(s_a, \downarrow, \rightarrow)$&{} \\
  $\delta(s_a, a)=(s_b, \leftarrow, \downarrow)$
  &$\delta(s_a, b)=(s_{rej}, -,-)$&{}\\
  $\delta(s_{crt}, a)=(s_{rej}, -, -)$\
  &$\delta(s_{crt}, b)=(s_{acc}, -, -)$&{}\\
\end{tabular}
\\
\\
{\it Remark :}When $\mathcal{M}$ comes to halting states, the
directions of the tape heads will be meaningless, and thus we use
``$-$".
\end{itemize}

It is easy to verify that $L_{eq}$ can be recognized by
$\mathcal{M}$. Checking whether the input $\omega$ is of the form
$a^{+}b^{+}$ takes  $|\omega|$ time. Scanning $\mathcal{T}_1$ from
right to left and scanning $\mathcal{T}_2$ from left to right
alternately takes $2|\omega|$ time at the worst case. Hence, the
worst running time of $\mathcal{M}$ is $\mathbf{O}(|\omega|)$,
which is linear time.

\end{proof}

\begin{Th}
$L_{pal}=\{\omega\in\{a,b\}^{*}\mid\omega=\omega^{R}\}$ can be
recognized by 2TFA in linear time.
\end{Th}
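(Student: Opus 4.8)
The plan is to recognize $L_{pal}=\{\omega\in\{a,b\}^{*}\mid\omega=\omega^{R}\}$ by using the two tapes to compare the input against its own reversal, position by position, working inward from both ends simultaneously. The key observation is that $\omega=\omega^{R}$ precisely when, for every $i$, the $i$-th symbol from the left equals the $i$-th symbol from the right. Since a single 2TFA cannot directly remember which symbol it read, I would instead synchronize the two heads so that head $1$ scans $\mathcal{T}_{1}$ from left to right while head $2$ scans $\mathcal{T}_{2}$ from right to left, checking equality of the two currently-scanned symbols at each matched pair of positions.

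The steps I would carry out, in order, are as follows. First, a short initialization phase: I would move head $2$ all the way to the right end-marker $\$_{2}$ (moving only one head at a time, as the model requires), leaving head $1$ at $|\hspace{-2.0mm}C_{1}$. Second, the main comparison loop: from a state that has just advanced head $1$ to scan a symbol $\gamma_{1}\in\{a,b\}$, I would branch into one of two ``carry'' states $s_{a}$ or $s_{b}$ recording whether $\gamma_{1}=a$ or $\gamma_{1}=b$; then move head $2$ one step left to scan $\gamma_{2}$, and accept the transition only if $\gamma_{2}$ matches the remembered symbol, rejecting otherwise. Using the finitely many carry states is what lets a constant-state machine ``remember'' a single scanned symbol across a head switch. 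Third, termination: the machine accepts when the two heads cross or meet — detected by reaching $\$_{1}$ with head $1$ or $|\hspace{-2.0mm}C_{2}$ with head $2$ at the appropriate synchronized moment — since this means all mirrored pairs matched. One must handle both even- and odd-length inputs, the odd case leaving a single middle symbol unpaired, which is automatically consistent.

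Concretely, I would give an explicit 2TFA $\mathcal{M}=(S,\Sigma,\delta,s_{0},S_{acc},S_{rej})$ with $S=\{s_{0},s_{1},s_{a},s_{b},s_{acc},s_{rej}\}$ and a transition table analogous to the one for $L_{eq}$: $s_{0}$ and $s_{1}$ drive head $2$ to the right end-marker; from the reading state, scanning $a$ (resp. $b$) on $\mathcal{T}_{1}$ sends the machine to $s_{a}$ (resp. $s_{b}$) while moving head $2$ one step left; from $s_{a}$, scanning $a$ on $\mathcal{T}_{2}$ continues the loop by advancing head $1$ right, while scanning $b$ goes to $s_{rej}$, and symmetrically for $s_{b}$; reaching the meeting condition sends the machine to $s_{acc}$. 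I would then argue correctness by a straightforward invariant: after $k$ iterations of the loop, the first $k$ symbols from the left have been verified equal to the first $k$ symbols from the right.

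The main obstacle, and the only genuinely delicate point, is the \emph{timing and detection of the stopping condition}. Because the constraint ``at most one head moves per step'' forces the two heads to advance in alternating steps rather than in lockstep, I must track their relative progress carefully so that the machine halts exactly when the heads have covered the whole string — neither double-counting the middle symbol of an odd-length palindrome nor overshooting past the crossing point and reading end-markers as if they were input symbols. Getting the boundary states to fire $s_{acc}$ at precisely the right configuration (and distinguishing the even-length meet from the odd-length cross) is where the bookkeeping must be pinned down; everything else follows the $L_{eq}$ template, and the running time is clearly $\mathbf{O}(|\omega|)$ since each of the two heads makes a single monotone pass.
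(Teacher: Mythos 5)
Your construction is essentially the paper's own proof with the roles of the two tapes swapped: the paper drives head $1$ to $\$_1$ and then scans $\mathcal{T}_1$ right-to-left against $\mathcal{T}_2$ left-to-right, using exactly the same carry states $s_a,s_b$ to remember one symbol across the head switch, and accepts upon head $1$ reaching $|\hspace{-2.0mm}C_1$ followed by head $2$ reaching $\$_2$. The ``delicate'' stopping condition you worry about is a non-issue: both you and the paper make each head traverse the \emph{entire} string (checking every mirrored pair twice, redundantly but harmlessly), so no meet-in-the-middle detection or even/odd case split is needed, and the end-markers alone trigger acceptance.
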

\begin{proof}We scan $\mathcal{T}_1$ from right to left and
scan $\mathcal{T}_2$ from left to right alternately, and each time
we make sure that the symbols scanned are the same. If the symbols
scanned are not the same, we reject the input $\omega$; otherwise,
when the tape head of $\mathcal{T}_1$ reaches the left end-marker
and the tape head of $\mathcal{T}_2$ reaches the right end-marker,
we accept it. A 2TFA $\mathcal{M}$ for $L_{pal}$ is defined as
follows:
\begin{equation}
\mathcal{M}=(S,\Sigma,\delta,s_{0},S_{acc},S_{rej})
\end{equation}
where,
\begin{itemize}

\item $S=\{s_0,s_t,s,s_a,s_b,s_{crt},s_{acc},s_{rej}\}$, where
$s_0$ and $s_t$ are used to move the first tape head to the right
end-marker of $\mathcal{T}_1$; $s_a$ and $s_b$ are used to consume
symbol $a$ and $b$ in $\mathcal{T}_2$, respectively; $s$ is used
to consume a symbol in $\mathcal{T}_1$; $s_{crt}$ is the critical
state which will be changed to $s_{acc}$ or $s_{rej}$ depending on
the next scanned symbol;

\item $\Sigma=\{a,b\}$; $s_{0}$ is the initial state;
$S_{acc}=\{s_{acc}\}$; $S_{rej}=\{s_{rej}\}$.

 \item The mapping $\delta$ is defined as follows:\\

 \begin{tabular}{lll}
 $\delta(s_0, \sigma)=(s_0, \rightarrow, \downarrow)$ for
 $\sigma=a,b,\ |\hspace{-2.0mm}C_1$&\ \ &
 $\delta(s_0, \$_1)=(s_t, \leftarrow, \downarrow)$\\
 $\delta(s_t, \sigma)=(s_{\sigma}, \downarrow, \rightarrow)$ for
 $\sigma=a,b$&\ \ \ &
 $\delta(s_t,\ |\hspace{-2.0mm}C_1)=(s_{acc}, -, -)$\\
 $\delta(s_{\sigma}, \sigma')=(s, \leftarrow, \downarrow)$ for
 $\sigma=\sigma'$&\ \ &
 $\delta(s_{\sigma}, \sigma')=(s_{rej},-,-)$ for
 $\sigma\neq\sigma'$\\
 $\delta(s, \sigma)=(s_{\sigma}, \downarrow, \rightarrow)$ for
 $\sigma=a,b$&\ \ \ &
 $\delta(s,\ |\hspace{-2.0mm}C_1)=(s_{crt}, \downarrow, \rightarrow)$\\
 $\delta(s_{crt}, \sigma)=(s_{rej},-, -)$ for $\sigma\neq \$_2$&\ \ \ &
 $\delta(s_{crt}, \$_2)=(s_{acc},-, -)$\\
 \end{tabular}

\end{itemize}
It is easy to verify that $L_{pal}$ can be recognized by
$\mathcal{M}$. It takes $\mathcal{M}$ exactly $3|\omega|$ time to
accept the input $\omega$ and less than $3|\omega|$ time to reject
it. So $L_{pal}$ can be recognized by 2TFA in linear time.
 \end{proof}

\begin{Th}
$L_{trieq}=\{a^{n}b^{n}c^{n}\mid n\in \mathbf{N}\}$ can be
recognized by 2TFA in linear time.
\end{Th}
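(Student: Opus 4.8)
The plan is to observe that $a^{n}b^{n}c^{n}$ is exactly the set of strings $\omega\in a^{+}b^{+}c^{+}$ (together with the empty string, corresponding to $n=0$) that satisfy the two equalities $\#_{\omega}(a)=\#_{\omega}(b)$ and $\#_{\omega}(b)=\#_{\omega}(c)$, and then to verify the format and these two equalities by reusing the matching technique of the $L_{eq}$ construction twice. First I would check, by a single left-to-right sweep on $\mathcal{T}_{1}$, that $\omega$ has the form $a^{+}b^{+}c^{+}$, using three working states to pass through the $a$-, $b$- and $c$-blocks in order and rejecting on any symbol that breaks this pattern; the empty input is accepted by a separate transition on the right end-marker.

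The heart of the construction is two successive count comparisons, each an exact analogue of the alternation between $s_{a}$, $s_{b}$ and $s_{crt}$ used for $L_{eq}$. For the first, I advance the head of $\mathcal{T}_{2}$ past the $a$-block to the start of the $b$-block while leaving the head of $\mathcal{T}_{1}$ at the start of the $a$-block, and then alternately consume one $a$ on $\mathcal{T}_{1}$ and one $b$ on $\mathcal{T}_{2}$, both heads moving right. A critical state tests, at the beat when $\mathcal{T}_{1}$ first reads a $b$ (so the $a$-block is exhausted), whether $\mathcal{T}_{2}$ simultaneously reads a $c$ (so the $b$-block is exhausted); these two events coincide iff $\#_{\omega}(a)=\#_{\omega}(b)$, and any mismatch leads to $s_{rej}$. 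This first comparison leaves $\mathcal{T}_{1}$ at the first $b$ and $\mathcal{T}_{2}$ at the first $c$, which is precisely the starting configuration for the second comparison. I then alternately consume one $b$ on $\mathcal{T}_{1}$ and one $c$ on $\mathcal{T}_{2}$, both moving right, and accept iff $\mathcal{T}_{1}$ reaches the first $c$ on the same beat that $\mathcal{T}_{2}$ reaches the right end-marker $\$_{2}$, which happens iff $\#_{\omega}(b)=\#_{\omega}(c)$.

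The main obstacle is purely one of synchronization: because the model reads a single tape symbol and moves a single head per step, the two ``this block ends here'' events on the two tapes must be detected on the same beat, so the offset of the alternation has to be arranged so that each critical state inspects the correct tape at the correct moment, exactly the role played by $s_{crt}$ in the $L_{eq}$ construction. Once the two critical states are phased correctly, correctness is immediate from the two equalities, and since the computation consists of a constant number of one-directional sweeps, each visiting every cell $\mathbf{O}(1)$ times, the total running time is $\mathbf{O}(|\omega|)$, i.e. linear.
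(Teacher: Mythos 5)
Your proposal is correct and follows essentially the same route as the paper: check the format $a^{+}b^{+}c^{+}$ and then run two $L_{eq}$-style alternating-consumption subroutines on the two tapes, with a critical state synchronizing the two block-boundary events. The only (immaterial) difference is that you verify $\#_{\omega}(a)=\#_{\omega}(b)$ and $\#_{\omega}(b)=\#_{\omega}(c)$ with both heads sweeping left to right, whereas the paper verifies $\#_{\omega}(c)=\#_{\omega}(a)$ and $\#_{\omega}(c)=\#_{\omega}(b)$ using the $c$-block on $\mathcal{T}_1$; both yield the same linear time bound.
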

\begin{proof}
The 2TFA $\mathcal{M}$ to recognize this language is similar to
the one recognizing $L_{eq}$. The idea of this proof is as
follows: First, we check whether the input $\omega$ is of the form
$a^{+}b^{+}c^{+}$; second, we use two subroutines of 2TFA
$\mathcal{M}'$  which recognizes $L_{eq}$ to check whether the
number of symbol $c$ is equal to the number of symbol $a$ and $b$.
We scan $\mathcal{T}_1$ from right to left and $\mathcal{T}_2$
from left to right alternately to make sure the number of symbol
$c$ in $\mathcal{T}_1$ is equal to the number of symbol $a$ in
$\mathcal{T}_2$. We scan $\mathcal{T}_1$ from left to right and
$\mathcal{T}_2$ from left to right alternately to make sure the
number of symbol $c$ in $\mathcal{T}_1$ is equal to the number of
symbol $b$ in $\mathcal{T}_2$. The details of 2TFA $\mathcal{M}$
for $L_{trieq}$ are omitted here.

It is easy to verify that $L_{trieq}$ can be recognized by
$\mathcal{M}$. It takes $\mathcal{M}$ $3|\omega|$ time to accept
the ipnut $\omega$ and less than $3|\omega|$ to reject it at the
worst cases. So $L_{trieq}$ can be recognized by 2TFA in linear
time.
\end{proof}

\begin{Th}\label{th4}
$L_{eq}(k,a)=\{a^{kn}b^{n}\mid n\in \mathbf{N}\}$ can be
recognized by 2TFA in linear time.
\end{Th}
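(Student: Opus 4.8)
The plan is to mimic the 2TFA for $L_{eq}$ built in the proof of Theorem~1, the only change being that each symbol $b$ must now be matched against $k$ symbols $a$ instead of one. First I would reuse the three states $s_0,s_1,s_2$ to verify that the input $\omega$ is of the form $a^{+}b^{+}$ (accepting the empty string immediately when $\$_1$ is met in state $s_0$, and rejecting any $\omega$ violating the pattern), and to drive the head of $\mathcal{T}_1$ rightward to $\$_1$; upon reading $\$_1$ in state $s_2$ I would move that head one cell left onto the rightmost $b$, exactly as for $L_{eq}$.

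Next I would replace the single ``consume one $a$'' state $s_a$ of Theorem~1 by a chain of $k$ states $s_a^{(1)},\dots,s_a^{(k)}$ acting as a counter that counts $a$'s in blocks of $k$. On scanning a $b$ in $\mathcal{T}_1$ (state $s_b$) the machine keeps $\mathcal{T}_1$ stationary, moves the head of $\mathcal{T}_2$ one cell right and enters $s_a^{(1)}$; for $1\le i<k$ a transition $\delta(s_a^{(i)},a)=(s_a^{(i+1)},\downarrow,\rightarrow)$ consumes the $i$-th $a$ of the current block, while $\delta(s_a^{(k)},a)=(s_b,\leftarrow,\downarrow)$ consumes the $k$-th $a$ and returns the head of $\mathcal{T}_1$ one cell left to the next $b$. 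Reading a $b$ in any state $s_a^{(i)}$ means $\mathcal{T}_2$ ran out of symbols $a$ prematurely and leads to $s_{rej}$. When the head of $\mathcal{T}_1$ finally leaves the $b$-block and scans an $a$ in state $s_b$, the machine performs one more rightward move of $\mathcal{T}_2$ and enters a critical state $s_{crt}$ that accepts iff the next cell of $\mathcal{T}_2$ holds a $b$ and rejects if it still holds an $a$, which is exactly the leftover test used for $L_{eq}$.

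To verify correctness I would write the input as $a^{N}b^{M}$ with $N,M\ge 1$ and maintain the invariant that after $j$ completed blocks the head of $\mathcal{T}_1$ sits $j$ cells left of the rightmost $b$ while the head of $\mathcal{T}_2$ has advanced exactly $kj$ cells. Three exhaustive cases then settle the claim: $N=kM$, where $s_{crt}$ reads the first $b$ and accepts; $N>kM$, where $s_{crt}$ still reads a leftover $a$ and rejects; and $N<kM$, where some $s_a^{(i)}$ reads a $b$ and rejects before all $b$'s are matched. In particular any $N$ that is not a multiple of $k$ cannot equal $kM$ and thus falls into one of the two rejecting cases. Since the head of $\mathcal{T}_1$ makes at most $|\omega|$ moves during the pattern check and $M$ moves during matching, while the head of $\mathcal{T}_2$ makes at most $kM+1\le|\omega|+1$ moves, the running time is $\mathbf{O}(|\omega|)$ for fixed $k$.

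I do not expect a genuine obstacle here, since the construction is a direct generalization of Theorem~1; the only point demanding care is the bookkeeping of the modulo-$k$ counter, in particular checking that the ``run out of $a$'s'' rejection inside the states $s_a^{(i)}$ correctly disposes of the case $N<kM$ (and of every $N$ that is not a multiple of $k$), and that the critical-state leftover test is reached exactly when all $M$ blocks have been matched, i.e.\ only when $N\ge kM$ so that $s_{crt}$ never scans $\$_2$. The constant hidden in the linear-time bound grows with $k$, which is the unavoidable price of the $k$-fold matching.
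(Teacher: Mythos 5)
Your construction is correct and is essentially identical to the paper's own proof: the same $a^{+}b^{+}$ pre-check with $s_0,s_1,s_2$, the same right-to-left scan of $\mathcal{T}_1$ matching each $b$ against a block of $k$ symbols $a$ on $\mathcal{T}_2$ via a chain of $k$ counter states (the paper calls them $s_{a1},\dots,s_{ak}$), and the same critical-state leftover test with the same linear-time bound.
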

\begin{proof}
The 2TFA $\mathcal{M}$ to recognize this language is similar to
the one recognizing
 $L_{eq}$. First we check whether the input $\omega$ is of the form
$a^{+}b^{+}$. But, we scan $\mathcal{T}_1$ from right to left to
consume one symbol $b$ and scan $\mathcal{T}_2$ from left to right
to consume $k$ symbols $a$ instead of just one symbol alternately.
A 2TFA $\mathcal{M}$ for $L_{eq}(k,a)$ is defined as follows:
\begin{equation}
\mathcal{M}=(S,\Sigma,\delta,s_{0},S_{acc},S_{rej})
\end{equation}
where,
\begin{itemize}
\item
$S=\{s_0,s_1,s_2,s_{a1},s_{a2},\dots,s_{ak},s_{b},s_{crt},s_{acc},s_{rej}\}$,
where $s_0,s_1$ and $s_2$ are used to check whether $\omega$ is of
the form $a^{+}b^{+}$; $s_{ai}$ is used to consume symbol $a$ in
$\mathcal{T}_2$ and $s_{b}$ is used to consume symbol $b$ in
$\mathcal{T}_1$; $s_{crt}$ is the critical state which  will be
changed to $s_{acc}$ or $s_{rej}$ depending on the next scanned
symbol;

\item $\Sigma=\{a,b\}$; $s_{0}$ is the initial state;
$S_{acc}=\{s_{acc}\}$; $S_{rej}=\{s_{rej}\}$;

\item The mapping $\delta$ is defined as follows:\\

\begin{tabular}{lll}
  $\delta(s_0,\ |\hspace{-2.0mm}C_1)=(s_0,
  \rightarrow,\downarrow)$\ \ \ \ &
  $\delta(s_0, a)=(s_1, \rightarrow, \downarrow)$&{ }\\
  $\delta(s_0, b)=(s_{rej}, -, -)$&
  $\delta(s_0, \$_1)=(s_{acc}, -, -)$&{ }\\
  $\delta(s_1, a)=(s_1, \rightarrow, \downarrow)$&
  $\delta(s_1, b)=(s_2, \rightarrow, \downarrow)$&
  $\delta(s_1, \$_1)=(s_{rej}, -, -)$\\
  $\delta(s_2, a)=(s_{rej}, -, -)$&
  $\delta(s_2, b)=(s_2, \rightarrow, \downarrow)$&
  $\delta(s_2, \$_1)=(s_{b}, \leftarrow, \downarrow)$\\
  $\delta(s_b, a)=(s_{crt}, \downarrow, \rightarrow, )$&
  $\delta(s_b, b)=(s_{a1}, \downarrow, \rightarrow, )$&{ }\\
  $\delta(s_{ai}, a)=(s_{a(i+1)}, \downarrow,\rightarrow)$ & for $i=1,2,\dots,k-1$&{ }\\
  $\delta(s_{ai}, b)=(s_{rej}, -,-)$ & for $i=1,2,\dots,k-1$&{ }\\
  $\delta(s_{ak}, a)=(s_b, \leftarrow, \downarrow)$&
  $\delta(s_{ak}, b)=(s_{rej}, -,-)$&{ }\\
  $\delta(s_{crt}, a)=(s_{rej}, -, -)$&
  $\delta(s_{crt}, b)=(s_{acc}, -, -)$&{ }\\
  \end{tabular}

\end{itemize}
It takes $\mathcal{M}$ $3|\omega|$ time at the worst cases. So
$L_{eq}(k,a)$ can be recognized by 2TFA in linear time.
\end{proof}

\begin{Th}
$M_{eq}=\{a^{n}b_{1}^{n}a^{m}b_{2}^{m}\mid n,m\in \mathbf{N}\}$
can be recognized by 2TFA in linear time.
\end{Th}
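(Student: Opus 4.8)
The key observation is that $M_{eq}$ is the concatenation of two independent equal-count constraints: a string belongs to $M_{eq}$ exactly when its prefix has the form $a^{n}b_{1}^{n}$ and its suffix has the form $a^{m}b_{2}^{m}$. Each of these two blocks is precisely an instance of the pattern $L_{eq}$ handled in Theorem 1, so the plan is to run the comparison routine of the 2TFA $\mathcal{M}'$ for $L_{eq}$ twice in succession --- once on the pair $(a,b_{1})$ and once on the pair $(a,b_{2})$ --- and to accept only if both comparisons agree. Because the three symbols $a$, $b_{1}$, $b_{2}$ are pairwise distinct, every block boundary is detectable by reading a single symbol, which is what allows the two phases to be cleanly separated.

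First I would make one left-to-right sweep of $\mathcal{T}_{1}$ to verify that $\omega$ has the form $a^{+}b_{1}^{+}a^{+}b_{2}^{+}$, using a constant number of states to record which of the four blocks the head currently occupies and rejecting on any forbidden transition. Then, in the first phase, I place the two heads at the two ends of the prefix $a^{n}b_{1}^{n}$ (the head of $\mathcal{T}_{2}$ on the first $a$, the head of $\mathcal{T}_{1}$ on the adjacent $b_{1}$-block) and run $\mathcal{M}'$: I alternate single-head moves, consuming one $a$ on one tape and one $b_{1}$ on the other, so that the two heads cross their respective block boundaries at the same step if and only if the first $a$-block and the $b_{1}$-block have equal length. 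In the second phase I reposition the heads to the two ends of the suffix $a^{m}b_{2}^{m}$ --- the head of $\mathcal{T}_{1}$ is already left sitting at the first $a$ of the second $a$-block, and I drive the head of $\mathcal{T}_{2}$ rightward to the first $b_{2}$ --- and repeat the same alternating comparison on the pair $(a,b_{2})$, the $b_{2}$-counting head terminating against $\$_{1}$. The automaton enters $s_{acc}$ only when both comparisons end in exact agreement, and $s_{rej}$ as soon as either one fails.

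The step I expect to be the main obstacle is the bookkeeping for the handoff between the two phases: after the first comparison the heads lie in the interior of the input, and I must steer them to the correct starting positions for the second comparison without losing track of which block each head is in. This is exactly where the distinctness of $b_{1}$ and $b_{2}$ is essential, since it lets a moving head recognize the $b_{1}/a$ and $a/b_{2}$ transitions unambiguously. I would also need to treat the degenerate cases $n=0$ or $m=0$ (admissible because $0\in\mathbf{N}$), in which one or both inner blocks vanish; these are absorbed by a few extra transitions in the form-checking stage and do not touch the core argument. Since the form check is a single pass and each comparison phase advances each head monotonically across the tape a bounded number of times, every stage costs $\mathbf{O}(|\omega|)$ steps, and therefore $\mathcal{M}$ recognizes $M_{eq}$ in linear time.
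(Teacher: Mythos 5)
Your proposal is correct and follows essentially the same route as the paper: verify the form $a^{+}b_{1}^{+}a^{+}b_{2}^{+}$ in one pass, then run the $L_{eq}$ comparison subroutine twice with both heads advancing monotonically, first matching the first $a$-block against the $b_{1}$-block and then the second $a$-block against the $b_{2}$-block. Your treatment of the phase handoff and the degenerate cases $n=0$ or $m=0$ is in fact more careful than the paper's, which omits these details.
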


\begin{proof}
 The 2TFA $\mathcal{M}$ to recognize this language is similar to the one recognizing
 $L_{trieq}$. We check whether the input $\omega$ is of the form
$a^{+}b_{1}^{+}a^{+}b_{2}^{+}$ at the beginning. Next we use two
subroutines of 2TFA $\mathcal{M}'$ which recognizes $L_{eq}$.
First,we scan $\mathcal{T}_1$ from left to right and
$\mathcal{T}_2$ from left to right alternately to make sure the
number of symbol $a$ in the first section of $\mathcal{T}_1$ is
equal to the number of symbol $b_1$ in $\mathcal{T}_2$. Second, we
scan $\mathcal{T}_1$ from left to right and $\mathcal{T}_2$ from
left to right alternately to make sure the number of symbol $a$ in
the second section of $\mathcal{T}_1$ is equal to the number of
symbol $b_2$ in $\mathcal{T}_2$. We do not bore the details of
 2TFA $\mathcal{M}$ for $M_{eq}$ here.

It is easy to verify that $M_{eq}$ can be recognized by
$\mathcal{M}$. It takes $\mathcal{M}$ $2|\omega|$ time at the
worst cases. So $M_{eq}$ can be recognized by 2TFA in linear time.
\end{proof}

\begin{Th}
$L_{=}=\{\omega\in \{a,b\}^{*}\mid\#_{\omega}(a)=\#_{\omega}(b)\}$
can be recognized by 2TFA in linear time.
\end{Th}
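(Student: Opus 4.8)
The plan is to build a 2TFA $\mathcal{M}$ that pairs the $a$'s off against the $b$'s: it uses $\mathcal{T}_{1}$ to locate the $a$'s one after another and $\mathcal{T}_{2}$ to locate the $b$'s one after another, and it accepts exactly when the two tapes exhaust their respective symbols simultaneously. The essential difference from $L_{eq}$ (Theorem~1) is that here the $a$'s and $b$'s may be arbitrarily interleaved, so neither head can sweep across a single homogeneous block as in the $a^{+}b^{+}$ case; instead each head must skip over the ``wrong'' symbol while hunting for the ``right'' one, and the comparison is purely between $\#_{\omega}(a)$ and $\#_{\omega}(b)$.

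Concretely, I would run the machine in alternating phases. In a \emph{search-}$a$ phase the head of $\mathcal{T}_{1}$ moves right, skipping every $b$, until it either lands on an $a$ or reaches $\$_{1}$. If it lands on the $k$-th $a$, the machine switches to a \emph{search-}$b$ phase in which the head of $\mathcal{T}_{2}$ moves right, skipping every $a$, until it meets the next $b$ (the $k$-th one) or reaches $\$_{2}$; finding the $k$-th $b$ it returns to the search-$a$ phase to look for the $(k+1)$-th $a$, whereas meeting $\$_{2}$ first signals an unmatched $a$ and the machine rejects. When the $\mathcal{T}_{1}$ head instead reaches $\$_{1}$, every $a$ has been matched, and $\mathcal{M}$ enters a final \emph{verify} phase that sweeps the rest of $\mathcal{T}_{2}$ (skipping $a$'s): reading a $b$ forces rejection (an unmatched $b$), while reaching $\$_{2}$ triggers acceptance.

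For the state set I would take $S=\{s_{0},s_{1},s_{2},s_{3},s_{acc},s_{rej}\}$, where $s_{1}$ drives the search-$a$ phase, $s_{2}$ the search-$b$ phase, and $s_{3}$ the verify phase. The transitions are $\delta(s_{0},\ |\hspace{-2.0mm}C_{1})=(s_{1},\rightarrow,\downarrow)$; then $\delta(s_{1},b)=(s_{1},\rightarrow,\downarrow)$, $\delta(s_{1},a)=(s_{2},\downarrow,\rightarrow)$, $\delta(s_{1},\$_{1})=(s_{3},\downarrow,\rightarrow)$; next $\delta(s_{2},a)=(s_{2},\downarrow,\rightarrow)$, $\delta(s_{2},b)=(s_{1},\rightarrow,\downarrow)$, $\delta(s_{2},\$_{2})=(s_{rej},-,-)$; and finally $\delta(s_{3},a)=(s_{3},\downarrow,\rightarrow)$, $\delta(s_{3},b)=(s_{rej},-,-)$, $\delta(s_{3},\$_{2})=(s_{acc},-,-)$. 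Note every move shifts exactly one head, so the restriction that at least one of $D_{1},D_{2}$ be stationary is respected throughout.

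The point I expect to require the most care is honoring the one-head-per-step restriction while still ``consuming'' one $a$ together with one matching $b$. The trick is to let the $\mathcal{T}_{2}$ head \emph{rest on} a matched $b$ rather than step past it: when $s_{2}$ reads the matching $b$ it advances only the $\mathcal{T}_{1}$ head (past the $a$) via $\delta(s_{2},b)=(s_{1},\rightarrow,\downarrow)$, leaving $\mathcal{T}_{2}$ parked on that $b$, so the next search-$b$ phase resumes by stepping right off it. This bookkeeping is exactly what makes the final verification clean: when $\mathcal{T}_{1}$ hits $\$_{1}$ the $\mathcal{T}_{2}$ head sits on the last matched $b$ (or on $\ |\hspace{-2.0mm}C_{2}$ if there were none), so the single step $\delta(s_{1},\$_{1})=(s_{3},\downarrow,\rightarrow)$ launches the verify sweep without ever running off the right end-marker. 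Correctness then follows from the invariant that after $k$ completed phases the $\mathcal{T}_{1}$ head sits just past the $k$-th $a$ and the $\mathcal{T}_{2}$ head rests on the $k$-th $b$; a short case analysis on whether $\#_{\omega}(a)$ exceeds, equals, or falls short of $\#_{\omega}(b)$ confirms the machine halts in $s_{acc}$ iff $\omega\in L_{=}$. Since each head moves only rightward and each tape has $|\omega|+2$ cells, the computation takes at most $2(|\omega|+2)=\mathbf{O}(|\omega|)$ steps, i.e.\ linear time.
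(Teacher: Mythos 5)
Your proposal is correct and follows essentially the same approach as the paper's proof: alternately advancing the $\mathcal{T}_{1}$ head to the next $a$ (skipping $b$'s) and the $\mathcal{T}_{2}$ head to the next $b$ (skipping $a$'s), rejecting if $\mathcal{T}_{2}$ runs out of $b$'s first, and finishing with a verification sweep of the remainder of $\mathcal{T}_{2}$ when $\mathcal{T}_{1}$ reaches its right end-marker. Your states $s_{1},s_{2},s_{3}$ correspond directly to the paper's $s_{a},s_{b},s_{crt}$, and the linear-time bound is argued the same way (each head only moves rightward).
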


\begin{proof}
 The 2TFA $\mathcal{M}$ to recognize this language is
similar to the one recognizing $L_{eq}$. But we do not need to
check whether $\omega$ is of the form $a^{+}b^{+}$. We scan
$\mathcal{T}_1$ for symbol $a$ from left to right and scan
$\mathcal{T}_2$ for symbol $b$ from left to right alternately. We
start at looking for symbol $a$ in $\mathcal{T}_1$. If the symbol
scanned is not symbol $a$, we skip it and look for the next symbol
in $\mathcal{T}_1$ until we scan symbol $a$. Once we have scanned
a symbol $a$ in $\mathcal{T}_1$, we look for a symbol $b$ in
$\mathcal{T}_2$. If the symbol scanned is not symbol $b$, we just
skip it, and look for the next symbol in $\mathcal{T}_2$ until we
scan symbol $b$ or come to the right end-marker. If the head of
$\mathcal{T}_2$ reaches the right end-marker, we reject the input
$\omega$; otherwise, turn to $\mathcal{T}_1$. When the tape head
of $\mathcal{T}_1$ reaches the right end-marker, if there is no
more symbol $b$ waiting for being scanned in $\mathcal{T}_2$, we
accept the input $\omega$; otherwise, we reject it. A 2TFA
$\mathcal{M}$ for $L_{eq}$ is defined as follows:
\begin{equation}
M=(S,\Sigma,\delta,s_{0},S_{acc},S_{rej})
\end{equation}
where,
\begin{itemize}

\item $S=\{s_0,s_a,s_b,s_{crt},s_{g},s_{l},s_{eq}\}$, where $s_a$
is used to consume symbol $a$ in $\mathcal{T}_1$ and $s_b$ is used
to consume symbol $b$ in $\mathcal{T}_2$; $s_{crt}$ is the
critical state to check whether $\omega$ is in $L$; $s_{g}$ means
$\#_{\omega}(a)>\#_{\omega}(b)$, $s_{l}$ means
$\#_{\omega}(a)<\#_{\omega}(b)$, and $s_{eq}$ means
$\#_{\omega}(a)=\#_{\omega}(b)$, respectively;

\item $\Sigma=\{a,b\}$; $s_{0}$ is the initial state;
$S_{acc}=\{s_{eq}\}$; $S_{rej}=\{s_g, s_l\}$;

 \item The mapping $\delta$ is defined as follows:\\

 \begin{tabular}{lll}
  $\delta(s_0, |\hspace{-2.0mm}C_1)=(s_0,
  \rightarrow,\downarrow)$&
  $\delta(s_0, \$_1)=(s_{acc}, -, -)$& \\

  $\delta(s_0, a)=(s_b, \downarrow, \rightarrow, )$&
  $\delta(s_0, b)=(s_0, \downarrow, \rightarrow, )$& \\

  $\delta(s_b, a)=(s_{b}, \downarrow, \rightarrow)$&
  $\delta(s_b, b)=(s_a, \rightarrow, \downarrow)$&
  $\delta(s_b, \$_2)=(s_g,-, -)$\\
  $\delta(s_a, a)=(s_b, \downarrow, \rightarrow)$&
  $\delta(s_a, b)=(s_a, \rightarrow, \downarrow)$&
  $\delta(s_a,\$_1)=(s_{crt}, \downarrow, \rightarrow)$\\
  $\delta(s_{crt},a)=(s_{crt}, \downarrow, \rightarrow)$\ &
  $\delta(s_{crt},b)=(s_{l}, -,-)$\ &
  $\delta(s_{crt},\$_2)=(s_{eq}, -, -)$
\end{tabular}

\end{itemize}

It is easy to verify that $L_{=}$ can be recognized by
$\mathcal{M}$. It takes $\mathcal{M}$ $2|\omega|$ time to accept
the input $\omega$ and no more than $2|\omega|$ time to reject it.
So $L_{=}$ can be recognized by 2TFA in linear time.
\end{proof}

Having proved several languages that can be recognized by 2TFA,
next we will prove a more general language
$L_{linear}=\{\sigma_{1}^{k_1}\sigma_{2}^{k_2}\dots
\sigma_{m}^{k_m} \mid a_{i}k_{i}=\sum_{j=1,j\neq i}^{m}a_{j}k_{j},
\Sigma=\{\sigma_{1},\sigma_{2},\dots, \sigma_{m}\},\linebreak[0]
m,{k_1},{k_2},\dots {k_m}, a_i\in \mathbf{N};\
a_1,a_2,\dots,a_{i-1},\linebreak[1] a_{i+1},\dots, a_m\in
\{0,1\}\}$ can be recognized by 2TFA in linear time. $L_{eq}$,
$L_{trieq}$, $M_{eq}$ and $L_{eq}(k,a)$ are just the special cases
of this language. $L_{linear}$  can also be recognized by 2QCFA in
polynomial time.

\begin{Th}
$L_{linear}=\{\sigma_{1}^{k_1}\sigma_{2}^{k_2}\dots
\sigma_{m}^{k_m} \mid a_{i}k_{i}=\sum_{j=1,j\neq
i}^{m}a_{j}k_{j},\ \Sigma=\{\sigma_{1},\dots,\linebreak[0]
\sigma_{m}\}, m,{k_1},\linebreak[0] {k_2},\dots {k_m}, a_i\in
\mathbf{N};\ a_1,a_2,\dots,a_{i-1},a_{i+1},\dots, a_m\in
\{0,1\}\}$ can be recognized by 2TFA in linear time.
\end{Th}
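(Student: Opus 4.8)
The plan is to separate the membership condition into two independent tests and to build a 2TFA that performs them in succession, reusing the matching idea already developed for $L_{eq}$ and for $L_{eq}(k,a)$ in Theorem~\ref{th4}. First I would observe that, since the off-index coefficients satisfy $a_j\in\{0,1\}$, the defining equation $a_ik_i=\sum_{j\neq i}a_jk_j$ for the distinguished (fixed) index $i$ is equivalent to
\begin{equation}
a_ik_i=\sum_{j\in J}k_j,\qquad J=\{\,j\neq i\mid a_j=1\,\},
\end{equation}
so the right-hand side is merely the total number of occurrences of the symbols $\{\sigma_j\mid j\in J\}$, an \emph{unweighted} count, while the only genuine multiplier is the fixed constant $a_i$. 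The automaton first checks, in one left-to-right pass of a single head, that $\omega\in\sigma_1^{*}\sigma_2^{*}\cdots\sigma_m^{*}$; this is a regular shape test and costs $O(|\omega|)$. It then verifies the single linear relation in a matching phase.

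For the matching phase I would exploit that the $\sigma_j$ are distinct, so block $i$ is exactly the maximal run of $\sigma_i$'s and can be located by scanning $\mathcal{T}_1$ rightward to its first $\sigma_i$. The head of $\mathcal{T}_1$ then walks through block $i$ one symbol at a time, and for each $\sigma_i$ consumed the head of $\mathcal{T}_2$ is advanced until it has passed exactly $a_i$ \emph{tokens}, where a token is any occurrence of a symbol indexed by $J$ and every other symbol (including the entire $\sigma_i$ block and the blocks with $a_j=0$) is skipped; counting up to the constant $a_i$ uses finitely many classical states, exactly as $s_{a1},\dots,s_{ak}$ do in the proof of Theorem~\ref{th4}. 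If $\mathcal{T}_2$ reaches the right end-marker before $a_i$ tokens are found, then $\sum_{j\in J}k_j<a_ik_i$ and the input is rejected. When block $i$ is exhausted, the machine makes a final sweep of the remainder of $\mathcal{T}_2$: it accepts iff no token remains (so the two sides are equal) and rejects otherwise, since a surviving token witnesses $\sum_{j\in J}k_j>a_ik_i$. Thus the run accepts precisely when $a_ik_i=\sum_{j\in J}k_j$.

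Each phase is a constant number of monotone sweeps -- the shape pass, the positioning of $\mathcal{T}_1$, the alternating match (during which $\mathcal{T}_2$ crosses the word at most once), and the final sweep of $\mathcal{T}_2$ -- so the total running time is $O(|\omega|)$, i.e. linear. The part that needs the most care is arranging all of this under the 2TFA restriction that at most one head moves per step while keeping the matching loop linear: the construction must interleave ``advance $\mathcal{T}_1$ by one $\sigma_i$'' with ``advance $\mathcal{T}_2$ across the next $a_i$ tokens'' without ever rescanning, which forces the token-skipping to be folded into the same states that perform the counting. The remaining obstacle is the bookkeeping for degenerate cases, which I would dispatch separately: if $a_i=0$ or $k_i=0$ the relation collapses to $\sum_{j\in J}k_j=0$, and if $J=\emptyset$ it collapses to $a_ik_i=0$; in each case membership is a purely regular emptiness-of-blocks condition testable in a single pass. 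Since $m$ and $a_i$ are fixed constants, all the block-tracking and the counting up to $a_i$ fit into a finite classical state set, so the resulting machine is a genuine 2TFA recognizing $L_{linear}$ in linear time.
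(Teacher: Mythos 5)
Your core subroutine is exactly the paper's: after the regular shape test for $\sigma_1^{*}\sigma_2^{*}\cdots\sigma_m^{*}$, you relabel $\sigma_i$ as $a$, the $\sigma_j$ with $a_j=1$ as $b$, and the $\sigma_j$ with $a_j=0$ as $c$, and verify $a_i\#_{\omega}(a)=\#_{\omega}(b)$ by consuming one $a$ on $\mathcal{T}_1$ against $a_i$ symbols $b$ on $\mathcal{T}_2$ while skipping $c$'s, all in monotone sweeps; your handling of the degenerate cases and of the one-head-per-step discipline is careful and correct. The one substantive divergence is the quantification over $i$: you fix a single distinguished index and verify one linear relation, whereas the paper reads $L_{linear}$ as defined by a system of up to $m-1$ such restrictions --- this reading is what makes $L_{trieq}$ and $M_{eq}$, each of which needs two equations, ``special cases'' as claimed just before the theorem. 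The paper accordingly chains $m-1$ copies of essentially your subroutine, returning both heads to the left end-markers between consecutive restrictions, for a total of $(4(m-1)+2)|\omega|$ steps. Your argument supplies the entire content of one such subroutine, so the repair is immediate (iterate it $m-1$ times with head resets, which only multiplies the running time by a constant and keeps it linear), but as written your machine does not recognize the multi-restriction language the paper intends.
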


\begin{proof}
 $\omega\in L_{linear}$ has at most $m-1$
restrictions such as $a_{i}k_{i}=\sum_{j=1,j\neq i}^{m}a_{j}k_{j}$
(if it has more than $m-1$ restrictions, they must be relative,
and we can replace them by less than $m-1$ restrictions), where
$m=|\Sigma|$ which is a constant. First, we use $\mathcal{T}_1$ to
check whether $\omega$ is of the form
$\sigma_{1}^{*}\sigma_{2}^{*}\dots \sigma_{m}^{*}$, and then we
move the tape head of $\mathcal{T}_1$ back to the left end-maker.
This takes $2|\omega|$ time. The 2TFA $\mathcal{M}$ for
$L_{linear}$ consists of by $m-1$ subroutines, and each subroutine
verifies one restriction. 2TFA $\mathcal{M}$ accepts $\omega$ if
all restrictions are satisfied. As an example, for a restriction
$a_{i}k_{i}=\sum_{j=1,j\neq i}^{m}a_{j}k_{j}$, symbol $\sigma_i$
can be regarded as symbol $a$, $\sigma_j(j\neq i, a_j\neq 0)$ can
be regarded as symbol $b$, and $\sigma_j(j\neq i, a_j= 0)$ can be
regarded as symbol $c$. This subroutine is equivalent to the
subroutine of recognizing $\{\omega\in\{a,b,c\}^{*} \mid
a_i\#_{\omega}(a)=\#_{\omega}(b)\}$, which is easy to implement.
We scan $\mathcal{T}_1$ from left to right to consume one symbol
$a$ and scan $\mathcal{T}_2$ from left to right to consume $a_i$
symbols $b$, and we just skip symbol $c$. This subroutine takes
$2|\omega|$ time. After a subroutine ends, we move both tape heads
to the left end-markers, which takes $2|\omega|$ time. So
$L_{linear}$ can be recognized by a 2TFA with running time
$(4(m-1)+2)|\omega|$ at the worst cases, which is linear time.

\end{proof}

{\it Remark :} $L_{linear}$ can be recognized by 2QCFA. We can
used m-1 subroutines to verify the m-1 restrictions, and the idea
is similar to the proof of the above theorem.

\subsection{Languages recognized by 2TFA but whether or not they can be recognized by 2QFA or 2QCFA is still pending}
In this subsection we will show that several languages can be
recognized by 2TFA in linear time, including  $L_{copy}$,
$L_{middle}$ and $L_{balanced}$. Whether $L_{middle}$ and
$L_{balanced}$ can be recognized by 2QCFA or 2QFA or not still
remains as an open problem \cite{AJ}.

\begin{Th}

$L_{copy}=\{\omega\omega\mid\omega\in\Sigma^{*}\}$  can be
recognized by 2TFA in linear time.
\end{Th}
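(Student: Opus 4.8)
The plan is to reduce membership in $L_{copy}$ to two verifications: that $|x|$ is even, and that the two halves of $x$ agree symbol by symbol. A string $x$ lies in $L_{copy}$ exactly when $|x|=2k$ for some $k$ and the symbol in position $i$ equals the symbol in position $k+i$ for every $1\le i\le k$. The whole difficulty is that a finite control cannot count up to $k$, so I will use the second tape to locate the midpoint mechanically rather than arithmetically, then sweep the two halves in opposite tapes.

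\textbf{Phase 1 (locate the midpoint and check parity).} I will run the two heads at a $1{:}2$ speed ratio. In each round I advance the head of $\mathcal{T}_1$ one cell to the right and the head of $\mathcal{T}_2$ two cells to the right; since the model permits moving only one head per step, a round costs three transitions, and two auxiliary classical states record whether the head of $\mathcal{T}_2$ is on the first or the second of its two substeps. After $t$ rounds head $1$ is at position $1+t$ and head $2$ at position $1+2t$. Thus head $2$ reaches $\$_2$ exactly at the end of a round precisely when $|x|=2k$ is even, and at that instant head $1$ sits at position $k+1$, the first symbol of the second half; if instead head $2$ meets $\$_2$ on the first substep of a round, then $|x|$ is odd and $\mathcal{M}$ rejects. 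I then rewind head $2$ to the first symbol (drive it left to the left end-marker, then one step right), leaving head $1$ fixed.

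\textbf{Phase 2 (compare the halves).} With head $2$ scanning the first half from the left and head $1$ scanning the second half from the midpoint rightward, I alternate reading a single symbol $\sigma$ under head $2$, storing it in the finite control as a state $s_\sigma$ (possible because $\Sigma$ is finite, so only $|\Sigma|$ such states are needed), then advancing head $1$ one cell and comparing the symbol it now reads against $\sigma$. A mismatch causes rejection; a match advances head $2$ and repeats. The comparison of position $i$ with position $k+i$ is arranged so that, once all $k$ pairs have matched, the next advance of head $1$ brings it onto $\$_1$, which triggers acceptance. A constant number of setup transitions fixes the one-cell positional offset between the two heads, and the empty input is accepted directly in Phase 1.

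Each phase traverses each tape a constant number of times, so the running time is $\mathbf{O}(|x|)$, i.e.\ linear, as required. The main obstacle is Phase 1: because the finite control cannot store the length $k$, the midpoint must be found by a purely mechanical device, and the $1{:}2$ speed ratio together with the right end-marker as a stopping signal is exactly what accomplishes this while simultaneously detecting odd length. The remaining work---interleaving single-head moves so that each head reads the intended cell under the restriction that at most one head moves per step, and pinning down the exact round in which head $2$ encounters $\$_2$---is routine bookkeeping with finitely many classical states, and I would present it as an explicit transition table in the style of the earlier theorems.
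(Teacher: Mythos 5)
Your proposal is correct and follows essentially the same route as the paper: use a $1{:}2$ speed ratio between the two heads to locate the midpoint (and detect odd length), then compare the first half on one tape against the second half on the other, symbol by symbol, in linear time. The only cosmetic differences are that you swap which tape plays the fast role and fold the parity check into the midpoint search rather than performing it as a separate preliminary pass.
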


\begin{proof}
First, we check whether the length of the input is even. If it is
odd, the input is rejected; otherwise, we continue to do the next
steps. The key to solve this problem is to find the middle of the
input. In terms of the advantage of two tapes, we can easily find
the middle as follows: Tape $\mathcal{T}_1$ scans two symbols from
left to right while tape $\mathcal{T}_2$ scans one symbol from
left to right. When tape $\mathcal{T}_1$ reaches the right
end-marker, tape $\mathcal{T}_2$ will just go to the middle
exactly. After doing this, we move the tape head of
$\mathcal{T}_1$ to the left end-maker \ $|\hspace{-2.0mm}C_1$, and
then compare the symbols between $\mathcal{T}_1$ and
$\mathcal{T}_2$ one by one from left to right until tape
$\mathcal{T}_2$ reaches the end. If all the symbols compared are
the same, the input is accepted; otherwise, it is rejected.
Obviously, this can be done in linear time.
\end{proof}

\begin{Th}
$L_{middle}=\{xay\mid x,y\in\Sigma^{*},a\in\Sigma\}$  can be
recognized by 2TFA in linear time.
\end{Th}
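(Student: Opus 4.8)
The plan is to reuse the middle-finding trick from the proof of the previous theorem for $L_{copy}$. A string $xay$ with $|x|=|y|$ has odd length $2|x|+1$, and its middle cell carries the distinguished symbol $a$; so the whole problem reduces to locating the middle cell, confirming that the length is odd, and reading the symbol stored there. Everything hinges on exploiting the two heads to pinpoint the middle, exactly as was done to split the input for $L_{copy}$.

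First I would run the two heads simultaneously at different speeds. Using a small cycle of classical states I let $\mathcal{T}_1$ advance two cells for every single cell that $\mathcal{T}_2$ advances; concretely, one ``round'' is the sequence of moves $(\cdot,\rightarrow,\downarrow)$, then $(\cdot,\rightarrow,\downarrow)$, then $(\cdot,\downarrow,\rightarrow)$, which respects the restriction that at most one head moves per step. After $k$ rounds the head of $\mathcal{T}_1$ sits on cell $2k$ while the head of $\mathcal{T}_2$ sits on cell $k$, so the moment $\mathcal{T}_1$ meets the right end-marker $\$_1$ the head of $\mathcal{T}_2$ is exactly at the middle cell.

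Then I would read off the parity together with the middle symbol. If $\mathcal{T}_1$ lands on $\$_1$ on the \emph{first} right-move of a round, the input length is even and there is no single middle symbol, so I reject; if it lands on $\$_1$ on the \emph{second} right-move, the length is odd, I advance $\mathcal{T}_2$ one more cell to reach the middle, and I accept iff the scanned symbol equals $a$ (rejecting otherwise). Since each head makes a single left-to-right pass and never backtracks, the whole computation runs in $\mathbf{O}(|\omega|)$ time, and the explicit transition table follows the same pattern as the ones written out for $L_{eq}$ and $L_{pal}$.

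The step I expect to be the main obstacle is the boundary bookkeeping: getting the synchronization exactly right so that, for odd length $n=2m+1$, the head of $\mathcal{T}_2$ stops precisely on cell $m+1$ and not one cell short or long, and so that the even-length case is distinguished cleanly by whether $\$_1$ is encountered on the first or the second move within a round. This is a routine but error-prone off-by-one analysis; once the round structure is fixed, the rest is mechanical.
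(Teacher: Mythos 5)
Your proposal is correct and follows essentially the same route as the paper: both reduce the problem to the two-speed middle-finding trick from the $L_{copy}$ theorem (head of $\mathcal{T}_1$ moving two cells per cell of $\mathcal{T}_2$), combined with an odd-length check and an inspection of the middle symbol. Your off-by-one bookkeeping (detecting parity by whether $\$_1$ is hit on the first or second move of a round, then advancing $\mathcal{T}_2$ one extra cell) is a valid and slightly more explicit way of handling the details the paper leaves implicit.
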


\begin{proof}
First, we check whether the length of the input is odd. If it is
even, the input is rejected; otherwise, we continue to do the
following steps. The key to solve this problem is still to find
the middle of the input. We can find it easily in the same way as
we did in the previous theorem. We check whether the symbol in the
middle is $a$. If the symbol is $a$, the input is accepted;
otherwise, it is rejected. Obviously, this can be done in linear
time.
\end{proof}

\begin{Th}
$L_{balanced}=\{x\in\{``(",``)"\}^*\mid$ parentheses in $x$ are
balanced $\}$ can be recognized by 2TFA in linear time.
\end{Th}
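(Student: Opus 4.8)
The plan is to reduce membership in $L_{balanced}$ to the two standard conditions for well-formed parentheses: the total number of left parentheses must equal the total number of right parentheses, and in every prefix of the input the number of left parentheses read so far must be at least the number of right parentheses. The first condition alone is essentially the language $L_{=}$ already handled, but the prefix condition is what makes $L_{balanced}$ properly context-free, so the two-tape construction must do more than a simple count matching. My idea is to use the \emph{position} of the second tape head, measured from its left end-marker $|\hspace{-2.0mm}C_{2}$, as an unbounded counter $c$ that records the running value $\#(\,\text{left}\,)-\#(\,\text{right}\,)$ over the portion of the input already scanned on $\mathcal{T}_{1}$. The content of $\mathcal{T}_{2}$ is irrelevant here; only the head position and the detection of the marker $|\hspace{-2.0mm}C_{2}$ are used.

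The algorithm I would implement scans $\mathcal{T}_{1}$ left to right. When the head of $\mathcal{T}_{1}$ reads a left parenthesis, I increment the counter by moving the head of $\mathcal{T}_{2}$ one cell to the right; when it reads a right parenthesis, I decrement by moving the head of $\mathcal{T}_{2}$ one cell to the left. Because a 2TFA may move at most one head per step, each symbol of $\mathcal{T}_{1}$ triggers a short fixed cycle of states: advance $\mathcal{T}_{2}$, then advance $\mathcal{T}_{1}$ to the next symbol. To enforce the prefix condition I maintain a single classical flag indicating whether $c=0$ (head of $\mathcal{T}_{2}$ resting on $|\hspace{-2.0mm}C_{2}$) or $c>0$; this flag is refreshed immediately after every move of the second head by inspecting the symbol it now scans, which is $|\hspace{-2.0mm}C_{2}$ exactly when $c=0$. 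If a right parenthesis is read while the flag says $c=0$, some prefix already has more right than left parentheses, and $\mathcal{M}$ rejects at once. When the head of $\mathcal{T}_{1}$ reaches $\$_{1}$, the input is accepted iff the flag says $c=0$, i.e.\ the counter has returned to its starting level; otherwise it is rejected. Note that $c=\#(\,\text{left}\,)-\#(\,\text{right}\,)$ over a prefix never exceeds the prefix length, so the head of $\mathcal{T}_{2}$ always stays strictly left of $\$_{2}$ and this boundary never needs special handling.

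For the running time, each input symbol causes only a constant number of head moves (one move of the second head, one read to refresh the flag, one advance of the first head), so the total number of steps is $\mathbf{O}(|x|)$, which is linear, as required. I expect the main obstacle to be the correct and economical detection of the zero level of the counter under the model's constraints: the tapes are read-only and exactly one head moves per step, so the machine cannot simply ``look at'' the second head while scanning the first. The device that overcomes this is carrying the zero/positive status of $c$ in the finite control and updating it as a side effect of each decrement and increment, exploiting the fact that the only relevant feature of $\mathcal{T}_{2}$ is whether its head currently rests on $|\hspace{-2.0mm}C_{2}$. Writing out the explicit transition table is then routine and follows the same pattern as the construction for $L_{=}$, augmented by the zero-detection flag that certifies the prefix invariant.
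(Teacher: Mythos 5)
Your proposal is correct and matches the paper's construction essentially exactly: the paper likewise characterizes $L_{balanced}$ by the two counting conditions (proved there as a separate lemma by induction, which you take as standard) and then uses the position of the $\mathcal{T}_{2}$ head as a unary counter for $\#_x(``(")-\#_x(``)")$, detecting zero by whether that head rests on $|\hspace{-2.0mm}C_{2}$, rejecting on a right parenthesis at zero, and accepting at $\$_{1}$ iff the counter is zero. Your extra remarks about the one-head-per-step constraint and carrying the zero flag in the finite control are sound implementation details consistent with the paper's argument.
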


\begin{proof}
 In order to prove this theorem, we need to prove a lemma. We use $F(w)$  to denote the set of all the prefixes of $w$
except empty string $\lambda$ and $w$.

\begin{Lm}
A string $w$ is in $L_{balanced}$ if and only if $w$
satisfies\\ (1).$\#_w(``(")=\#_w(``)")$ and\\
(2).$\#_x(``(")\geq\#_x(``)")$ for all $x\in F(w)$.
\end{Lm}
\begin{proof}
\begin{enumerate}
 \item[{\it 1}]: First, we show that if $w$ satisfies $\#_w(``(")=\#_w(``)")$ and $\#_x(``(")$ $\geq\#_x(``)")$ for all $x\in
 F(w)$, then $w\in L_{balanced}$. We prove this by induction on
 the length of $w$.
 \begin{description}
   \item BASIS: $|w|=0$ is obvious. When $|w|=2$, let
   $w=ab$, where $a,b\in\{``(",``)"\}$. We have $F(w)=\{a\}$ according to the definition. Because
   $\#_a(``(")\geq\#_a(``)")$, $a$ must be $``("$. Because $\#_w(``(")=\#_w(``)")$,
   $b$ must be $``)"$. Therefore $w=``()"$, which follows that $w\in  L_{balanced}$.
   \item INDUCTION: Suppose that when $|w|\leq 2n (n\geq 1)$, the result holds. We will prove it holds for $|w|=2(n+1)$.
\begin{description}
  \item[Case1] There exists a string $u\in F(w)$ which satisfies
  $\#_u(``(")=\#_u(``)")$. In this case, let $w=uv(u,v\in\{``(",``)"\}^*)$.
  Obviously, $u$ satisfies condition(1) and (2), and meanwhile we have $|u|\leq 2n$. According to the assumption, we get $u\in L_{balanced}$.
  Furthermore, we have $\#_v(``(")=\#_w(``(")-\#_u(``(")=\#_w(``)")-\#_u(``)")=\#_v(``)")$. So we get $v$
  satisfies condition(1). For every $x\in F(v)$, we have
  $\#_x(``(")=\#_{ux}(``(")-\#_u(``(")\geq\#_{ux}(``)")-\#_u(``)")=\#_x(``)")$. So we get $v$
  satisfies condition (2). Because $|v|\leq 2n$, we get $v\in L_{balanced}$ according to the assumption. It follows that $w\in
  L_{balanced}$.
  \item[Case2] Otherwise, for every $u\in F(w)$ satisfies $\#_u(``(")>\#_u(``)")$. Let
  $w=aw'b$. According to condition(2), we have $\#_a(``(")\geq\#_a(``)")$,
  and thus we get $a=``("$. Because $|aw'|$ is
  odd, we have  $\#_{aw'}(``(")>\#_{aw'}(``)")$ according to condition (2). According to
  condition(1), we have $\#_w(``(")=\#_w(``)")$, and thus it holds that $b=``)"$. It follows that
  $\#_{w'}(``(")=\#_w(``(")-1=\#_w(``)")-1=\#_{w'}(``)")$. Thus $w'$
  satisfies condition(1). For every $v\in F(w')$,
  we have $\#_v(``(")=\#_{av}(``(")-1>\#_{av}(``)")-1\geq\#_{av}(``)")=\#_{v}(``)")$,
  thus $w'$ satisfies condition(2). Meanwhile, we have $|w'|=2n$, and thus we get $w'\in L_{balanced}$ according to
the assumption. It follows that $w\in L_{balanced}$.
\end{description}
 \end{description}

 \item[{\it 2}]: We prove the other direction here. If $w\in L_{balanced}$,
 then $w$ satisfies condition (1) and (2). We prove this by induction
 on the length of $w$.

 \begin{description}
   \item BASIS: $|w|=0$ is obvious. When $|w|=2$, it must be the case
   $w``()"$. Obviously, it satisfies condition (1) and (2).
 \item INDUCTION: Suppose that when $|w|\leq 2n (n\geq 1)$, the result holds. We will prove it holds for
  $|w|=2(n+1)$.

   \begin{description}
  \item[Case1] $w$ can be divided into two parts as $uv$, where $u,v\in L_{balanced}$.
  Obviously, we have $|u|\leq 2n$ and $|v|\leq2n$. According to the assumption, we get $u$ and $v$
  satisfy condition (1) and (2). It follows that
  $\#_w(``(")=\#_u(``(")+\#_v(``(")=\#_u(``)")+\#_v(``)")=\#_w(``)")$, which says $w$ satisfies
  condition (1). For every $x\in F(w)$, if $|x|<=|u|$, then we have $x\in F(u)$ and $\#_x(``(")\geq \#_x(``)")$; otherwise, let $x=uy$, and then we have $y\in
  F(v)$. Therefore, it follows that $\#_x(``(")=\#_u(``(")+\#_y(``(")\geq \#_u(``)")+\#_y(``)")=\#_x(``)")$. So
  we get $w$ satisfies condition (2).

 \item[Case2] $w$ can not be divided into two parts as $uv$ such that $u,v\in L_{balanced}$.
  In this case, $w$ must be of the form $``("w'``)"$, where $w'\in L_{balanced}$
and $|w'|=2n$. According to the assumption, we have $w'\in
L_{balanced}$
 satisfies condition (1) and (2). It follows that
 $\#_w(``(")=\#_{w'}(``(")+1=\#_{w'}(``)")+1=\#_w(``)")$, which says $w$ satisfies
 condition(1). For every $x\in F(w)$, let $x=``("y$. Then we have $y\in
 F(w')$. Therefore we have $\#_x(``(")=1+\#_y(``(")\geq
1+\#_y(``)")>\#_y(``)")=\#_x(``)")$, i.e.,
 $w$ satisfies condition(2).
\end{description}

 \end{description}

\end{enumerate}
Therefore, the lemma has been proved.
\end{proof}
Now we return to the proof of the theorem. According to the lemma
proved above, it is enough for us to check whether the input $w$
satisfies condition (1) and (2). We use tape $\mathcal{T}_1$ to
scan the input $w$ from left to right, and use tape
$\mathcal{T}_2$ to save the result $R=\#_x(``(")-\#_x(``)")$ for
every $x\in F(w)$. If the tape head of $\mathcal{T}_2$ is on \
$|\hspace{-2.0mm}C_2$, it means that $R=0$; if the tape head of
$\mathcal{T}_2$ is on the $i$th square of $w$, it means that
$R=i$. When  tape $\mathcal{T}_1$ scans a symbol $``("$, the tape
head of $\mathcal{T}_2$ moves right by one square; when tape
$\mathcal{T}_1$ scans a symbol $``)"$, the tape head of
$\mathcal{T}_2$ moves left by one square. If the tape head of
$\mathcal{T}_2$ is on \ $|\hspace{-2.0mm}C_2$, and the symbol
scanned by the tape head of $\mathcal{T}_1$ is $``)"$, then the
input is rejected. When the tape head of $\mathcal{T}_1$ comes to
the right end-marker of $w$, and the tape head of $\mathcal{T}_2$
is on \ $|\hspace{-2.0mm}C_2$, then the input is accepted.
Obviously, this can be done in linear time. Therefore, we have
proved the theorem.
\end{proof}

\section{Languages recognized by 2TQCFA}
 Ambainis and  Watrous introduced a model of quantum
automata called 2QCFA and proposed an open problem of whether
$L_{square}=\{a^{n}b^{n^{2}}\mid n\in \mathbf{N}\}$ can be
recognized by 2QCFA or not \cite{AJ}. Although this problem
remains  open
 for 2QCFA, we will propose a new model of quantum automata
called 2TQCFA by which $L_{square}$ can be recognized in
polynomial time. In addition, Ambainis and  Watrous did not give
the details on how to use a 2QCFA to simulate a coin flip that is
an essential component for 2QCFA to recognize languages. Thus we
will give the details here.
\begin{Th}\label{th1}
For any $\epsilon>0$, there is a 2TQCFA $\mathcal{M}$ that accepts
any $\omega \in L_{square}=\{a^{n}b^{n^{2}}\mid n\in \mathbf{N}\}$
with certainly, rejects any $\omega \notin
L_{square}=\{a^{n}b^{n^{2}}\mid n\in \mathbf{N}\}$ with
probability at least $1-\epsilon$ and halts in polynomial time.
\end{Th}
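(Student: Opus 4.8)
The plan is to reduce membership in $L_{square}$ to a single arithmetic test, $(\#_\omega(a))^2=\#_\omega(b)$, and to decide that test with a single qubit driven by an irrational rotation, following the Ambainis--Watrous template for $L_{eq}$. First I would use the two tapes to verify deterministically that $\omega$ has the form $a^*b^*$, rejecting immediately otherwise; this costs linear time and leaves the two heads at the left end-markers. The quantum register is one qubit with basis $\{|0\rangle,|1\rangle\}$, and the two primitives I need are the rotation $R(\theta)\colon |0\rangle\mapsto\cos\theta\,|0\rangle+\sin\theta\,|1\rangle,\ |1\rangle\mapsto-\sin\theta\,|0\rangle+\cos\theta\,|1\rangle$ with $\theta=\sqrt2\,\pi$, and a fair coin flip, which is just the Hadamard transform $H$ applied to $|0\rangle$ followed by a projective measurement in the computational basis (this is the ``coin flip'' whose implementation I would spell out, since it is the one primitive left implicit in \cite{AJ}).

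The core step is to realize a net rotation by $\big((\#_\omega(a))^2-\#_\omega(b)\big)\theta$ using only the two heads. I would run a nested loop over the $a$-block: the head of $\mathcal{T}_2$ serves as an outer counter that advances one $a$ at a time (so it makes exactly $n=\#_\omega(a)$ passes), and for each of its positions the head of $\mathcal{T}_1$ sweeps the whole $a$-block applying $R(\theta)$ once per $a$ and is then reset to $|\hspace{-2.0mm}C_1$. After the outer loop this has applied $R(\theta)$ exactly $n^2$ times; a final sweep of the $b$-block applying $R(-\theta)=R(\theta)^{-1}$ once per $b$ produces the claimed net rotation, all in $O(n^2)$ steps and respecting the ``at most one head moves'' restriction by alternating the moves. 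Measuring now returns $|0\rangle$ with certainty when $n^2=\#_\omega(b)$ (the rotation is the identity), and returns $|1\rangle$ with probability $p=\sin^2\!\big(d\theta\big)>0$ when $d:=n^2-\#_\omega(b)\neq0$, because $d\sqrt2$ is never an integer. On seeing $|1\rangle$ I reject.

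The hard part is amplification: members always measure $|0\rangle$ and must be accepted with certainty, while a non-member may have $p$ very small, so a single test cannot guarantee rejection with probability $1-\epsilon$. Two quantitative ingredients control this. First, since $\sqrt2$ has bounded partial quotients it is badly approximable, so $\|d\sqrt2\|\geq c/|d|$ for an absolute constant $c$, whence $p=\sin^2(\pi\|d\sqrt2\|)\geq 4c^2/d^2=\Omega(1/|d|^2)$; as $|d|\leq |\omega|^2$ this gives $p\geq\Omega(1/|\omega|^4)$, a polynomial lower bound. Second, I would repeat the rotation test and let the coin flips drive termination: after each $|0\rangle$ outcome the machine either loops again or, with a small escape probability produced by coin flips (equivalently, after a number of repetitions that a tape head counts up to a polynomial in $|\omega|$), accepts. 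Choosing this escape probability of order $\epsilon\cdot p$ --- safely, of order $\epsilon/|\omega|^4$ --- makes the total probability of wrongly accepting a non-member at most $\epsilon$, keeps the acceptance of members certain, and keeps the expected running time polynomial, since each test costs $O(n^2)$ and only polynomially many tests are needed. I expect the main obstacle to be exactly this balancing act: implementing a length-dependent escape probability (or repetition count) with a constant-size quantum register and a constant classical state set, and verifying that the badly-approximable bound is strong enough to force the number of tests to stay polynomial.
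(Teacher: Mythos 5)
Your rotation gadget is sound and in fact takes a slightly different route from the paper: you generate the full $n^2$ positive rotations directly by a nested double sweep of the $a$-block and then undo one $\theta$ per $b$, whereas the paper first checks deterministically that $\omega=a^nb^{kn}$ (a variant of its Theorem~\ref{th4}) and then compares $k$ with $n$, rotating by $+\alpha$ per $a$ and by $-\alpha$ per group of $n$ consecutive $b$'s counted off against the $a$-block of $\mathcal{T}_2$. Both yield a net rotation $d\sqrt2\,\pi$ with $d=0$ exactly on members, and your badly-approximable bound $\sin^2(d\sqrt2\,\pi)=\Omega(1/d^2)$ is the same estimate the paper proves by hand; your $|d|$ can be as large as $|\omega|^2$ rather than $|\omega|$, which only costs a polynomial factor.

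The genuine gap is the one you flag yourself and then leave unresolved: the mechanism that accepts with probability $\Theta(\epsilon/|\omega|^{O(1)})$ per round. A constant number of coin flips cannot produce a length-dependent probability, and ``a tape head counts repetitions up to a polynomial'' does not work here because both heads are already consumed inside each round by the rotation test (and a single head can only count to $|\omega|$ anyway, while the state set is constant). Without this gadget the machine either cannot halt on members or cannot keep the false-acceptance probability below $\epsilon\cdot p$, so the theorem is not yet proved. The paper's resolution is the Ambainis--Watrous random-walk trick adapted to two tapes: after each successful measurement it runs, twice, a coin-flip-driven random walk on $\mathcal{T}_1$ with absorbing barriers at the end-markers, where one step over the $b$-block consumes $n$ symbols $b$ at a time (metered by the $a$-block of $\mathcal{T}_2$), so the walk has $n+k+1$ effective positions and reaches the right end with probability exactly $1/(n+k+1)$; two successes followed by $l$ further coin flips give acceptance probability $1/(2^l(n+k+1)^2)$ per round, which with $l=1+\lceil\log\varepsilon\rceil$ is dominated by the rejection probability $\geq 1/(2(n-k)^2)$ on non-members and still sums to $1$ over infinitely many rounds on members. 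You would need to supply this (or an equivalent) construction, and also adapt the walk's step size to your setting where $m$ need not be a multiple of $n$, before your amplification paragraph becomes a proof.
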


\begin{proof}
The main idea is as follows: First, we  verify whether it holds
that $\omega\in\{a^{n}b^{kn}\mid n\in \mathbf{N}\}$. This can be
done in linear time, and the proof is similar to the proof of
Theorem~\ref{th4}. Second, we verify whether it holds that $k=n$.
We consider a 2TQCFA $\mathcal{M}$ with 2 quantum states
$|q_0\rangle$ and $|q_1\rangle$. $\mathcal{M}$ starts with the
quantum state $|q_0\rangle$. Firstly, we scan $\mathcal{T}_1$, and
every time when $\mathcal{M}$ scans symbol $a$ in $\mathcal{T}_1$,
the quantum state rotated by angle $\alpha=\sqrt 2\pi$. After all
symbols $a$ in $\mathcal{T}_1$ have been scanned, we scan  symbol
$b$ in $\mathcal{T}_1$ and  symbol $a$ in $\mathcal{T}_2$
alternately. When the number of symbol $b$ in $\mathcal{T}_1$
having been scanned equals to the total number of symbol $a$ in
$\mathcal{T}_2$, the quantum state is rotated by $-\alpha$. When
the end of $\mathcal{T}_1$ is reached, we measure the quantum
state. If the measurement result is $|q_1\rangle$, $\omega$ is
rejected. Otherwise, the process is repeated.

After having verified $\omega\in\{a^{n}b^{kn}\mid n\in
\mathbf{N}\}$, the quantum state is rotated by $\alpha$ when
$\mathcal{M}$ scans symbol $a$ in $\mathcal{T}_1$, and the quantum
state is rotated by $-\alpha$ when  $\mathcal{M}$ has scanned $n$
symbols $b$ in $\mathcal{T}_1$ in the process  presented above. If
$k=n$, rotations of quantum state cancel one another and the final
quantum state will be $|q_0\rangle$ with certainty. Otherwise, the
final state will be a superposition state because $\sqrt{2}$ is
irrational, and the amplitude of $|q_1\rangle$ in the state is
sufficiently large, which means that repeating the process
$\mathbf{O}(n^2)$ times guarantees getting $|q_1\rangle$ at least
once with high probability. $\mathcal{M}$ also needs to halt and
accept input $\omega\in\{a^{n}b^{n^2}\mid n\in \mathbf{N}\}$ after
repeating the process $\mathbf{O}(n^2)$ times rather than
repeating it forever. To achieve this,  We periodically execute a
subroutine that accepts with a small probability $c/n^2$, which is
much smaller than the probability of getting $|q_1\rangle$. If
$\omega\notin\{a^{n}b^{n^2}\mid n\in \mathbf{N}\}$, this does not
have much influence.

 The process is descried as follows:
\begin{description}
    \item[1] If the input $\omega=\lambda$, accept.\\
    Check whether $\omega$ is of the form $a^nb^{kn}(n>0,k>0)$. If
    not, reject.
    \item[2]Otherwise, repeat the following routine infinitum:
    \begin{description}
        \item[(1)] Move the tape head of $\mathcal{T}_1$ to\ \
        $|\hspace{-2.0mm}C_1$ and move the tape head of $\mathcal{T}_2$
        to\ \
        $|\hspace{-2.0mm}C_2$. Set the quantum state to be
        $|q_0\rangle$.
        \item[(2)] While the currently scanned symbol is not $\$_1$, do the following:
\begin{description}
    \item[1)]  If the currently scanned symbol of
        $\mathcal{T}_1$ is $a$\\
        \ \ \ \ Perform $U_{\alpha}$ on the
        quantum state and move the tape head of $\mathcal{T}_1$ one square
        right.
    \item[2)]Else if the currently scanned symbol of
        $\mathcal{T}_1$ is $b$\\
        Turn to $\mathcal{T}_2$. Move the tape head of $\mathcal{T}_2$ one square
        right and scan a symbol.
        \begin{description}
        \item[a)]If the scanned symbol of
        $\mathcal{T}_2$ is $a$\\
        Turn to $\mathcal{T}_1$, move the tape head of $\mathcal{T}_1$ one square
        right and scan a symbol.
        \item[b)]Else if the scanned symbol of
        $\mathcal{T}_2$ is $b$ \\
        Perform $U_{-\alpha}$ on the
        quantum state.
        Move the tape head of
        $\mathcal{T}_2$ back to \ $|\hspace{-2.0mm}C_2$.
       Turn to $\mathcal{T}_1$, move the tape head of $\mathcal{T}_1$ one square
        right and scan a symbol.
        \end{description}
\end{description}
        \item[(3)] Perform $U_{-\alpha}$ on the
        quantum state.
        \item[(4)] Measure the quantum state. If the
        result is $|q_1\rangle$, reject.
        \item[(5)]Repeat the following subroutine two times:
        \begin{description}
            \item[1)] Move the tape head of $\mathcal{T}_1$ back to\ \
            $|\hspace{-2.0mm}C_1$, and move the tape head of $\mathcal{T}_2$ back to \ $|\hspace{-2.0mm}C_2$.
            \item[2)] Move the tape head of $\mathcal{T}_1$ one square right and
            scan a symbol.
            \item[3)] While the currently scanned symbol is not
            \ $|\hspace{-2.0mm}C_1$ or $\$_1$, do the following:

            Simulate a coin flip.
            \begin{description}
                \item[A].  If the result is
            ``head":\\
                Move the tape head of $\mathcal{T}_1$ one square right and scan a
                symbol.
                \begin{description}
                    \item[a)]  If the symbol scanned of $\mathcal{T}_1$ is $a$, do nothing.
                    \item[b)] If the symbol scanned of $\mathcal{T}_1$ is $b$\\
                    Turn to $\mathcal{T}_2$, move the tape head of $\mathcal{T}_2$ one square right and scan a
                    symbol.
                    \begin{description}
                    \item [$\diamond$]If the scanned symbol of $\mathcal{T}_2$ is $a$\\
                    Turn to $\mathcal{T}_1$, move the tape head
                    of $\mathcal{T}_1$ one square right, scan a symbol and goto \textbf{b)}.

                    \item [$\diamond$]Else if the scanned symbol of $\mathcal{T}_2$ is
                    $b$\\
                    Move the tape head of $\mathcal{T}_2$ to
                    \
                    $|\hspace{-2.0mm}C_2$. Turn to $\mathcal{T}_1$, and move the tape head of $\mathcal{T}_1$ one square left.
                 \end{description}

                \end{description}
                \item[B]. Else\\
                 Move the tape head of $\mathcal{T}_1$ left and scan a symbol.
                 \begin{description}

                    \item[a)]  If the symbol scanned of $\mathcal{T}_1$ is $a$, do nothing.
                    \item[b)] If the symbol scanned of $\mathcal{T}_1$ is $b$\\
                    Turn to $\mathcal{T}_2$, move the tape head of $\mathcal{T}_2$ one square right and scan a
                    symbol.
                     \begin{description}
                    \item [$\diamond$]If the scanned symbol of $\mathcal{T}_2$ is $a$\\
                    Turn to $\mathcal{T}_1$, move the tape head
                    of $\mathcal{T}_1$ one square left and scan a symbol. Goto \textbf{b)}.
                   \end{description}
                     \item[c)] Move the tape head of $\mathcal{T}_2$ to
                    \
                    $|\hspace{-2.0mm}C_2$.  Turn to $\mathcal{T}_1$, and move the tape head of $\mathcal{T}_1$ one square right.
                \end{description}
            \end{description}
        \end{description}
           \item[(6)]If both times the process ends at the right
            end-marker of $\mathcal{T}_1$, simulate $l$ coin
            flips. If all the results are "heads", accept.
    \end{description}
\end{description}

A 2TQCFA $\mathcal{M}$ to recognize $L_{square}$ is defined  as
follows:
\begin{equation}
\mathcal{M}=(Q,S,\Sigma,\Theta,\delta,q_{0},s_{0},S_{acc},S_{rej})
\end{equation}

where,
\begin{itemize}
\item $Q=\{q_0, q_1\}$;

\item $S$ is a finite set of classical states;

\item $\Sigma=\{a,b\}$;\ \ $|\hspace{-2.0mm}C_1$,\
$|\hspace{-2.0mm}C_2$\  are the left end-markers of
$\mathcal{T}_1$ and $\mathcal{T}_2$, respectively; $\$_1$,\
$\$_2$\ are the right end-markers of $\mathcal{T}_1$ and
$\mathcal{T}_2$, respectively;

\item $q_{0}$ is the initial quantum state;
 \item $s_{0}$ is the
initial classical state; \item $S_{acc}=\{s_{acc}\}$;
\item$S_{rej}=\{s_{rej}\}$.
\end{itemize}

In fact, the key step is to construct quantum and classical
transition functions $\Theta$ and $\delta$. We construct them
 according to the process described
above.

For \textbf{1}: Decide whether the input $\omega$ is of the form
$a^nb^{kn}(n>0,k>0)$. For the trivial case $a^nb^{kn}(n=0)$ we
accept it at this subroutine. In this subroutine, we do not change
the quantum state. The quantum transition function is
    the identity $I$. The classical
transition function is similar to the one recognizing $w\in
\{a^{kn}b^{n}\mid n\in \mathbf{N}\}$, and we omit the details
here.

For \textbf{2-(1)}: Move the tape heads of $\mathcal{T}_1$ and
$\mathcal{T}_2$ to \ \
        $|\hspace{-2.0mm}C_1$ and \
        $|\hspace{-2.0mm}C_2$, respectively.
In this subroutine, we do not change the quantum state. The
quantum and classical transition functions are easy to describe.
We omit the details here.

 For \textbf{2-(2)} and \textbf{2-(3)}: Do the quantum operations.
\begin{itemize}
\item the classical states are
$S_2=\{s_{02},s_{a2},s_{b1},s_{re3},s_{crt2},s_{m}\}$, where
$s_{02}$ is the starting state of this subroutine; $s_{re3}$ is
the state for moving back to $\ |\hspace{-2.0mm}C_2$ of
$\mathcal{T}_2$; $s_{a2}$ and $s_{b1}$ are states to consume a
symbol $a$ and $b$ from left to right in $\mathcal{T}_2$ and
$\mathcal{T}_1$, respectively; $s_{m}$ is the ending state of this
subroutine waiting to make a measurement for the quantum state.

 \item The quantum
transition function is defined as follows:\\

  \begin{tabular}{lll}
  $\Theta(s_{02}, a)=U_{\alpha}$\ \ \ &
  $\Theta(s_{a2}, b)=U_{-\alpha}$\ \ \ \ \ \\ &
  $\Theta(s_{crt2}, b)=U_{-\alpha}$\\
  $\Theta(s, \sigma)=I$ for $s\in\{s_{b1},s_{re3}\}$&{ }&{ }
  \end{tabular}

 \item the
classical transition function is defined as follows:\\

 \begin{tabular}{ll}
    $\delta(s_{02}, a)=(s_{02}, \rightarrow,\downarrow)$&
    $\delta(s_{02}, b)=(s_{a2}, \downarrow,\rightarrow)$\\
    $\delta(s_{a2}, a)=(s_{b1}, \rightarrow,\downarrow)$&
    $\delta(s_{a2}, b)=(s_{re3}, \downarrow,\leftarrow)$\\
    $\delta(s_{re3}, a)=(s_{re3}, \downarrow,\leftarrow)$\ \ \ \ \ \ \ \ \ \ &
    $\delta(s_{re3},\ |\hspace{-2.0mm}C_2)=(s_{a2},
    \downarrow,\rightarrow)$\\
    $\delta(s_{b1}, b)=(s_{a2}, \downarrow,\rightarrow)$&
    $\delta(s_{b1}, \$_1)=(s_{crt2}, \downarrow,\rightarrow)$\\
    $\delta(s_{crt2},b)=(s_{m}, \downarrow,\leftarrow)$&{ }\\
  \end{tabular}

\end{itemize}
 For \textbf{2-(4)}: Measure the quantum state.
\begin{itemize}
\item the classical states are $S_3=\{s_{m},s_{re4}\}$.
 \item The quantum
transition function is defined as follows:\\
\begin{tabular}{l}
$\Theta(s_{m}, \sigma)=M=\{P_0,P_1\}$  for any
$\sigma\in\Sigma$,\\
\end{tabular} where
\begin{equation}
 P_0=|0\rangle\langle 0|, P_1=|1\rangle\langle 1|.
 \end {equation}
 \item the
classical transition function is defined as follows:\\
 \begin{tabular}{lll}
    $\delta(s_{m},\sigma)(1)=(s_{rej}, -,-)$,&
    $\delta(s_{m},\sigma)(0)=(s_{re4}, \downarrow,\leftarrow)$& for any $\sigma\in \Sigma$.\\
  \end{tabular}
\end{itemize}
 For \textbf{2-(5)}: Two times random walk. The random walk here is a
 little  different from the random walk in Ambainis and Watrous's paper
 \cite{AJ}. We still consider scanning a symbol $a$ as one step, but we
consider scanning  $n$ symbols $b$ as one step in this random
walk.

We do not bore all the details of quantum and classical transition
functions here. In the random walk, every time we go on a step by
scanning a symbol $a$ or $n$ symbols  $b$. Simulation of  a coin
flip is an essential component in this subroutine. The machine
$\mathcal{M}$ simulates a coin flip according to the following
transition functions, with $s_{03}$, $|q_0\rangle$ as the starting
classical and quantum states, respectively.

Let projective measurement $M=\{P_0,P_1\}$, where
\begin{equation}
 P_0=|0\rangle\langle 0|, P_1=|1\rangle\langle 1|.
 \end {equation} The
results 0 and 1 represent the results of coin flip ``head" and
``tail", respectively. Unitary operator $U$ is defined as follow:
  $$U=\left(%
\begin{array}{cc}
  \frac{1}{\sqrt{2}} &  \frac{1}{\sqrt{2}} \\
   \frac{1}{\sqrt{2}} &  -\frac{1}{\sqrt{2}} \\
\end{array}%
 \right).$$
 This operator changes the base state $|q_0\rangle$ or $|q_1\rangle$
 to be a superposition state $|\psi\rangle$ or $|\phi\rangle$, respectively, as
 follows:
\begin{equation}
|\psi\rangle=\frac{1}{\sqrt{2}}(|q_0\rangle+|q_1\rangle),
\end{equation}
\begin{equation}
|\phi\rangle=\frac{1}{\sqrt{2}}(|q_0\rangle-|q_1\rangle).
\end{equation}
When measuring $|\psi\rangle$ or $|\phi\rangle$
 with $M$, we will get the result  0 or 1 with probability of
 $\frac{1}{2}$, respectively. This is similar to the coin flip
 process. If the result is 0, we simulate a right step; if the result
 is 1, we simulate a left step.\\

 \begin{tabular}{lll}
\ \ \ \ \ \ \ \ &$\Theta(s_{03}, \sigma)=U$ &
 $\delta(s_{03}, \sigma)=(s_{03}', \downarrow, \downarrow)$\\
 {}&$\Theta(s_{03}', \sigma)=M$ &{}\\
 {}&$\delta(s_{03}', \sigma)(0)=(s_{03}, \rightarrow, \downarrow)$\ \ \ \ \ \ \ \ &
 $\delta(s_{03}', \sigma)(1)=(s_{03}, \leftarrow, \downarrow)$
 \end{tabular}\\

For \textbf{2-(6)}: Simulate $l$ coin flips. If all the results
are ``heads", the input is accepted. The classic state starts from
$s_{cf}$ and the quantum state starts from $|q_1\rangle$. We still
use unitary operator $U$ and projective measurement $M$ described
as above to simulate a coin flip. We use the measurement result 0
and 1 to represent ``head" and  ``tail", respectively.
\begin{itemize}
\item the classical states are
$\{s_{fail},s_{acc},s_{cf}^{(0,i)},s_{cf}^{(1,i)},
i=0,1,2,...,l+1\}$, where $s_{cf}^{(0,0)}$ is the starting state.

 \item The quantum and classical transition functions are defined as follows:\\
 for $0\leq i\leq l$,\\

 \begin{tabular}{ll}
   $\Theta(s_{cf}^{(0,i)},\$_1)=U$ & $\delta(s_{cf}^{(0,i)},\$_1)=(s_{cf}^{(1,i)},\downarrow,\downarrow)$  \\
   $\Theta(s_{cf}^{(1,i)},\$_1)=M$ & {}\\
   $\delta(s_{cf}^{(1,i)},\$_1)(1)=(s_{fail},\leftarrow,\downarrow)$\ \ \ \ \ \  & $\delta(s_{cf}^{(1,i)},\$_1)(0)=(s_{cf}^{(0,i+1)},\downarrow,\downarrow)$\\
 \end{tabular}\\

 When the classical state changes to $s_{fail}$, it means that the
 ``tail" result of a coin flip has occurred. There is no need to simulate
 any more coin flips. The tape head of $\mathcal{T}_1$ should be moved back to the first
 symbol of the tape and  another new iteration should be started. We omit the
 transition function here. When the classical state changes to
 $s_{cf}^{(0,l+1)}$, it means that all the results of the $k$ coin flips
 are ``heads". The classical state should be changed to an accepting
 state. The transition function is as follow:\\

\begin{tabular}{l}
   $\delta(s_{cf}^{(0,l+1)},\$_1)=(s_{acc},-,-)$.\\
 \end{tabular}\\

\end{itemize}

\begin{Lm}
If the input is $\omega=a^nb^{kn}$ and $k\neq n$, $\mathcal{M}$
rejects after \textbf{2-(2),2-(3),2-(4)} with probability at least
$1/2(n-k)^2$.
\end{Lm}
\begin{proof}
The state $|q_0\rangle$ is rotated by $\sqrt{2}\pi$ when
$\mathcal{T}_1$ scans one symbol $a$, and it is rotated by
$-\sqrt{2}\pi$ when $\mathcal{T}_1$ scans $n$ symbols $b$. The
$|q_0\rangle$ is rotated by $\sqrt{2}(n-k)\pi$ after
\textbf{2-(2),2-(3)}. So the quantum state of $\mathcal{M}$ after
rotating is
$\cos(\sqrt{2}(n-k)\pi)|q_0\rangle+\sin(\sqrt{2}(n-k)\pi)|q_1\rangle$.
In \textbf{2-(4)} the probability of observing  $|q_1\rangle$ is
$\sin^2(\sqrt{2}(n-k)\pi)$. Without loss of generality, we assume
$n-k>0$. Let $l$ be the closest integer to $\sqrt{2}(n-k)$. Assume
that $\sqrt{2}(n-k)>l$ (the other case is symmetric), then
$2(n-k)^2>l^2$. So we get $2(n-k)^2-1\geq l^2$ and $l\leq
\sqrt{2(n-k)^2-1}$. We have
\begin{equation}
\sqrt{2}(n-k)-l\geq \sqrt{2}(n-k)-\sqrt{2(n-k)^2-1}
\end{equation}

\begin{equation}
=\frac{(\sqrt{2}(n-k)-\sqrt{2(n-k)^2-1})(\sqrt{2}(n-k)+\sqrt{2(n-k)^2-1})}{\sqrt{2}(n-k)+\sqrt{2(n-k)^2-1}}
\end{equation}
\begin{equation}
=\frac{1}{\sqrt{2}(n-k)+\sqrt{2(n-k)^2-1}}>\frac{1}{2\sqrt{2}(n-k)}.
\end{equation}

Because $l$ is the closest integer to $\sqrt{2}(n-k)$, we have
$0<\sqrt{2}(n-k)-l<1/2$. Assume $f(x)=sin(x\pi)-2x$. We have
 $f''(x)=-\pi^2\sin(x\pi)\leq 0$ when $x\in
[0,1/2]$. That is to say, $f(x)$ is concave in $[0,1/2]$, and we
have $f(0)=f(1/2)=0$. So for any $x\in[0,1/2]$, it holds that
$f(x)\geq 0$, that is, $\sin(x\pi)\geq 2x$. Therefore
\begin{equation}
\sin^2(\sqrt{2}(n-k)\pi)=\sin^2((\sqrt{2}(n-k)-l)\pi)
\end{equation}
\begin{equation}
\geq (2(\sqrt{2}(n-k)-l))^2 =4(\sqrt{2}(n-k)-l)^2
\end{equation}
\begin{equation}
\geq 4(\frac{1}{2\sqrt{2}(n-k)})^2=\frac{1}{2(n-k)^2}.
\end{equation}
So the lemma has been proved.
\end{proof}

\begin{Lm}
If the input is $\omega=a^nb^{kn}$ and $k\neq n$, $\mathcal{M}$
accepts after \textbf{2-(5),2-(6)} with probability
$1/2^l(n+k+1)^2$.
\end{Lm}
\begin{proof}
\textbf{2-(5)} is similar to two times of random walk (takes
scanning $n$ symbols $b$ as one step) starting at location 1 and
ending at location 0 (the left end-marker\ \
$|\hspace{-2.0mm}C_1)$) or at location $(k+1)n+1$ (the right
end-marker \$). It can be known from probability theory that the
probability of reaching the location $(k+1)n+1$ is $1/(n+k+1)$
(see Chapter14.2 in \cite{WF}). Repeating it twice and flipping
$l$ coins, we  get the probability $1/2^l(n+k+1)^2$.
\end{proof}

Let $l=1+  \lceil\log \varepsilon\rceil$. If
$\omega=a^nb^m$,$m\neq kn$, then step \textbf{1} always rejects
it. If $\omega=a^nb^{n^2}$, \textbf{2-(2),2-(3),2-(4)} always turn
$|q_0\rangle$ to $|q_0\rangle$, and $\mathcal{M}$ never rejects.
After \textbf{2-(5),2-(6)}, the probability of $\mathcal{M}$
accepting $\omega$ is $1/2^l(n+k+1)^2$. Repeating step \textbf{2}
for $cn^2$ times, the accepting probability is
$1-(1-\frac{1}{2^l(n+k+1)^2})^{cn^2}$, and this can be made
arbitrarily close to 1 by selecting constant $c$ appropriately. If
$x=a^nb^{kn}$ and $k\neq n$, $\mathcal{M}$ rejects after
\textbf{2-(2),2-(3),2-(4)} with probability $P_r>1/2(n-k)^2$.
$\mathcal{M}$ accepts after \textbf{2-(5),2-(6)} with probability
$P_a=1/2^l(n+k+1)^2\leq \varepsilon/2(n+k+1)^2$. If we repeat step
\textbf{2} indefinitely, the probability of rejecting is
\begin{equation}
\sum_{i\geq 0}(1-P_a)^i(1-P_r)^iP_r=\frac{P_r}{P_a+P_r-P_aP_r}
\end{equation}
\begin{equation}
>\frac{P_r}{P_a+P_r}>\frac{1/2}{1/2+\varepsilon/2}=\frac{1}{1+\varepsilon}>1-\varepsilon.
\end{equation}

If we assume the input $\omega=a^nb^m$, step \textbf{1} takes
$\mathbf{O}(n+m)$ time, step \textbf{2-{1}} takes
$\mathbf{O}(n+m)$ time, \textbf{2-{2},2-{3},2-{4}} take
$\mathbf{O}(n+m)$ time, and \textbf{2-{5},2-{6}} take
$\mathbf{O}(n+m)^2$ time. The expected number of repeating step
\textbf{2} is $\mathbf{O}((n+m/n)^2)$. Hence, the expected running
time of $\mathcal{M}$ is at most $\mathbf{O}((n+m/n)^2(n+m)^2)$.
\end{proof}

\section{Languages recognized by $m$TFA and $m$TQCFA}
We know that 2QCFA are more powerful than 2DFA, but we do not know
whether $k$TQCFA are more powerful than $k$TFA or not. Although we
can not prove $k$TQCFA are strictly more powerful than $k$TFA, we
will give an example which seems to imply that the result may be
true. We will show that $ \{a^{n}b^{n^{k}}\mid n\in
\mathbf{N}\}(k=1,2...)$ can be recognized by $(k+1)$TFA in linear
time. Also, we will prove that $ \{a^{n}b^{n^{k}}\mid n\in
\mathbf{N}\}(k=1,2...)$ can be recognized by $k$TQCFA with one
side-error in polynomial time. Thus, it seems that $k$TQCFA are
more powerful than $k$TFA.

\begin{Th}
$ L_{square}=\{a^{n}b^{n^{2}}\mid n\in \mathbf{N}\}$ can be
recognized by 3TFA in linear time.
\end{Th}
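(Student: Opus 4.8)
The plan is to exploit the identity $n^2=\underbrace{n+n+\cdots+n}_{n\ \text{times}}$, so that checking $\#_\omega(b)=(\#_\omega(a))^2$ becomes a doubly nested counting loop that three tapes can carry out exactly. I will let $\mathcal{T}_1$ be the tape whose head sweeps through and ``consumes'' the $b$-block, $\mathcal{T}_2$ serve as an \emph{inner} counter that measures $n$ by sliding across the $a$-block, and $\mathcal{T}_3$ serve as an \emph{outer} counter that records how many length-$n$ blocks of $b$'s have already been consumed. All three tapes carry the same input $a^{n}b^{n^{2}}$, and throughout I move only one head per step, as the $m$TFA model requires.

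First I would run a linear preprocessing pass on $\mathcal{T}_1$ to verify that $\omega$ has the form $a^{*}b^{*}$, rejecting otherwise, and then position the heads: $\mathcal{T}_1$ on the first $b$, and $\mathcal{T}_2,\mathcal{T}_3$ on the first $a$. The main loop then proceeds in rounds. In one round I slide $\mathcal{T}_2$ rightward across the $a$-block and, for each $a$ read, advance $\mathcal{T}_1$ over exactly one $b$; when $\mathcal{T}_2$ first reads a $b$ (it has crossed the whole $a$-block) the round ends, having consumed precisely $n$ symbols $b$ on $\mathcal{T}_1$. After each round I advance $\mathcal{T}_3$ by one $a$ and reset $\mathcal{T}_2$ back to the first $a$; I leave the loop as soon as $\mathcal{T}_3$ reads a $b$, that is, after exactly $n$ rounds. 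Finally I \textbf{accept} iff the head of $\mathcal{T}_1$ then sits exactly on $\$_1$.

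Correctness follows because after $n$ rounds $\mathcal{T}_1$ has consumed $n\cdot n=n^{2}$ symbols $b$, so its head lands on $\$_1$ iff $\#_\omega(b)=n^{2}$; the two failure modes are detected directly, since if $\mathcal{T}_1$ reaches $\$_1$ in the middle of a round there are too few $b$'s, and if $b$'s remain after the loop there are too many. The degenerate inputs are handled automatically: the empty string ($n=0$) is accepted, while $b^{m}$ with $m>0$ forces zero rounds with $\mathcal{T}_1$ not on $\$_1$ (reject), and $a^{n}$ with $n>0$ triggers the premature-$\$_1$ rejection. Each round costs $O(n)$ steps (the inner sweep plus the reset of $\mathcal{T}_2$), so the $n$ rounds cost $O(n^{2})$, and since $|\omega|=n+n^{2}$ this, together with the $O(|\omega|)$ preprocessing, is linear in $|\omega|$.

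The bookkeeping all fits in a constant number of classical states, so the construction is a genuine 3TFA. The part that needs the most care is coordinating the three tapes under the ``at most one head moves per step'' restriction: I must encode in the finite control which tape is currently active and which phase of the round I am in, and I must detect the two $a\!\to\!b$ boundary crossings ($\mathcal{T}_2$ to end a round, $\mathcal{T}_3$ to end the outer loop) cleanly so that exactly $n$ rounds occur and the resets of $\mathcal{T}_2$ never disturb the running count on $\mathcal{T}_1$. This is precisely the structure that generalizes in Section 5, where one extra counting tape per extra factor of $n$ yields the $(k+1)$TFA for $L_{kpow}$.
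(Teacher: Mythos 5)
Your construction is correct and is essentially the paper's own: both use $\mathcal{T}_2$ and $\mathcal{T}_3$ as nested counters over the $a$-block (a two-digit base-$n$ counter) while $\mathcal{T}_1$ consumes the $b$'s, for a total of $n\cdot O(n)=O(|\omega|)$ steps. The only cosmetic differences are which head's symbol drives the loop guard and that you test acceptance by $\mathcal{T}_1$ reaching $\$_1$ after exactly $n$ rounds rather than by checking that both counters are full, which if anything states the final condition more cleanly.
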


\begin{proof}

The process to recognize $L_{square}$ is described as follows:
\begin{description}
\item[1] If the input $\omega=\lambda$, accept. \item[2] Check
whether the input $\omega$ is of the form $a^+b^+$. If not,
reject. \item[3] Move the tape head of $\mathcal{T}_1$ to the
first symbol $b$ in $\mathcal{T}_1$.
\item[4] While the currently scanned symbol by $\mathcal{T}_1$ is $b$, do the following:\\
 Turn to $\mathcal{T}_2$, move the tape head of $\mathcal{T}_2$ one square right and scan a symbol.
\begin{description}
\item[(1)] If the currently scanned symbol by $\mathcal{T}_2$ is $b$\\
Move the tape head of $\mathcal{T}_2$ back to \
$|\hspace{-2.0mm}C_2$. Turn to $\mathcal{T}_3$, move the tape head
of $\mathcal{T}_3$ one square right and scan a symbol.
\begin{description}
\item[(A)] If the currently scanned symbol by $\mathcal{T}_3$ is $b$\\
Reject.
\item[(B)] Else\\
 Turn to $\mathcal{T}_1$, move the tape head of $\mathcal{T}_1$ one square right and scan a symbol.
\end{description}
\item[(2)] Else\\
  Turn to $\mathcal{T}_1$, move the tape head of $\mathcal{T}_1$ one square right and scan a symbol.
\end{description}
\item[5]  Turn to $\mathcal{T}_2$, move the tape head of
$\mathcal{T}_2$ one square right and scan a symbol.
\begin{description}
\item[(1)]If the currently scanned symbol by $\mathcal{T}_2$ is not $b$\\
Reject.
\item[(2)] Else\\
Turn to $\mathcal{T}_3$, move the tape head of $\mathcal{T}_3$ one
square right and scan a symbol.
\begin{description}
\item[(A)] If he currently scanned symbol by $\mathcal{T}_2$ is not $b$\\
Reject. \item[(B)] Else, accept.
\end{description}
\end{description}
\end{description}
We assume the input $\omega=a^{n}b^m$. Let $k_2$ be the number of
symbol $a$ having been scanned by tape head $\mathcal{T}_2$ and
$k_3$  be the number of symbol $a$ having been scanned by tape
head $\mathcal{T}_3$  from left to right. Actually,  $k_2k_3$ is
acting
 like a $2$ digits of $n$-radix number. When the tape head of $\mathcal{T}_1$  scans a symbol $b$, we just add $1$ to $k_2$.
When $k_2$ is large enough ($k_2=n$), we turn $k_2$ to 0 and add
$1$ to $k_3$. It is easy to verify that this process will accept $
L_{square}=\{a^{n}b^{n^{2}}\mid n\in \mathbf{N}\}$ in linear time.

\end{proof}

The above result is easy to extend to the case of $(k+1)$TFA. So
we get the following Theorem.

\begin{Th}
$ \{a^{n}b^{n^{k}}\mid n\in \mathbf{N}\}(k=1,2,...)$ can be
recognized by $(k+1)$TFA in linear time.
\end{Th}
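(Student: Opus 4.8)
The plan is to lift the $3$TFA construction for $L_{square}$ to $k+1$ tapes, using the extra tapes as a $k$-digit odometer in radix $n$. First I would dispose of the trivial input $\omega=\lambda$ and use $\mathcal{T}_1$ to check that $\omega$ has the form $a^{+}b^{+}$ and to move its head to the first $b$; this is exactly the linear-time preprocessing used in the preceding theorems and costs $\mathbf{O}(|\omega|)$. The tapes $\mathcal{T}_2,\dots,\mathcal{T}_{k+1}$ are never compared against $\omega$ as a word — they serve only as counters, the head of $\mathcal{T}_j$ recording one radix-$n$ digit by its distance from $|\hspace{-2.0mm}C_j$ as it ranges over the $a$-block of length $n$.

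The counting loop reads the $b$-block on $\mathcal{T}_1$ one symbol at a time, and each $b$ scanned increments the least significant digit, i.e. advances the head of $\mathcal{T}_2$ by one square. Carries are implemented exactly as in the $k=2$ case, but cascaded: whenever advancing the head of $\mathcal{T}_j$ makes it leave the $a$-block and scan a $b$ (an overflow of digit $j$), the machine resets $\mathcal{T}_j$'s head to $|\hspace{-2.0mm}C_j$ and advances the head of $\mathcal{T}_{j+1}$, then repeats the overflow test at level $j+1$; if the top counter $\mathcal{T}_{k+1}$ overflows, the input carries too many $b$'s and is rejected. When $\mathcal{T}_1$ reaches $\$_1$, the machine accepts iff every digit head sits at the boundary position certifying that the counter value equals $n^{k}$ — the same terminal test that step \textbf{5} of the $3$TFA construction performs, now applied to each of the $k$ digits. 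A short induction on $k$, with the $3$TFA theorem above ($k=2$, and $L_{eq}$ for $k=1$) as base case, shows that after reading the $j$-th $b$ the heads of $\mathcal{T}_2,\dots,\mathcal{T}_{k+1}$ spell out the radix-$n$ representation of $j$, so the accept test fires precisely when $m=n^{k}$.

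The step I expect to be the main obstacle is the boundary bookkeeping, not the high-level idea. Because each counter head ranges over an $a$-block of length $n$, the overflow of a digit is detected one square \emph{past} the last $a$, so the cascade and the terminal acceptance test must be aligned digit-by-digit to guarantee that the machine halts-and-accepts exactly at $m=n^{k}$ rather than at $n^{k}$ displaced by a pending carry. This off-by-one alignment must hold for all $k$ digits at once, and it is the only point where the $(k+1)$-tape machine is not a verbatim copy of the $3$TFA; I would pin it down by the same induction, treating the acceptance configuration as the unique counter state $(n,\dots,n)$.

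Finally I would verify that the running time stays linear. Writing $m=n^{k}$, the loop advances $\mathcal{T}_1$ and $\mathcal{T}_2$ once per $b$, contributing $\mathbf{O}(m)$ moves. The only additional cost is the resets: the head of $\mathcal{T}_j$ is reset $\mathbf{O}(m/n^{\,j-1})$ times and each reset walks back across at most $n$ squares, so the resets contribute $\sum_{j=2}^{k+1}\mathbf{O}(n)\cdot\mathbf{O}(m/n^{\,j-1})=\mathbf{O}(m)\sum_{j\ge 2}n^{-(j-2)}=\mathbf{O}(m)$ moves in total, the geometric series being bounded by a constant independent of $n$ and $k$. Hence, together with the $\mathbf{O}(|\omega|)$ preprocessing, the whole computation uses $\mathbf{O}(n+m)=\mathbf{O}(|\omega|)$ steps, which is linear, completing the proof.
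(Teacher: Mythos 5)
Your proposal is correct and follows essentially the same route as the paper: tapes $\mathcal{T}_2,\dots,\mathcal{T}_{k+1}$ act as a $k$-digit radix-$n$ odometer driven by the $b$-block on $\mathcal{T}_1$, with cascaded carries, rejection on top-digit overflow, and acceptance on the exact terminal counter configuration, exactly generalizing the paper's 3TFA for $L_{square}$. The paper leaves all details (including the linear-time claim) as ``similar to the above theorem,'' whereas you additionally supply the geometric-series accounting of the reset costs and correctly single out the digit-boundary (off-by-one) alignment as the only point requiring care.
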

\begin{proof} The idea to prove this theorem is almost the same as the  one   in the previous theorem.
 Let $\mathcal{T}_1$ scan
symbol $b$, and the numbers of symbols $a$ having been scanned by
$\mathcal{T}_i(i=2,3,...,k+1)$ are acting like a $k$ digits of
$n$-radix number. The details are similar to the above theorem,
and thus we do not bore here.
\end{proof}

\begin{Co}\label{co1}
$\{a^{n}b^{rn^{k-1}}\mid n\in \mathbf{N},r\in
\mathbf{N}\}(k=1,2,...)$ can be recognized by $k$TFA in linear
time.
\end{Co}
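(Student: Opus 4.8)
The plan is to adapt the $(k+1)$TFA construction of the preceding theorem, exploiting the observation that the exponent $rn^{k-1}$ is exactly the condition that the number of $b$'s be divisible by $n^{k-1}$. To test divisibility by $n^{k-1}$ it suffices to count the $b$'s modulo $n^{k-1}$, and a counter modulo $n^{k-1}$ needs only $k-1$ base-$n$ digits, rather than the $k$ digits used to count all the way up to $n^k$. Since one tape is still needed to hold the input and feed the $b$'s one at a time, the total number of tapes is $(k-1)+1=k$, as required. For the degenerate case $k=1$ the language is simply $\{a^n b^r\mid n,r\in\mathbf{N}\}=a^*b^*$, which a single tape recognizes directly; so I would assume $k\geq 2$ in the main argument.

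Concretely, I would first use $\mathcal{T}_1$ to verify that $\omega$ has the form $a^*b^*$ (rejecting otherwise), to dispose of the trivial inputs, and to locate the first $b$. The tapes $\mathcal{T}_2,\dots,\mathcal{T}_k$ then serve as the $k-1$ digits of an $n$-radix counter, encoded exactly as in the previous proof: the value of digit $i$ is the position of the head of $\mathcal{T}_i$ inside its own $a^n$ block, with the head on $|\hspace{-2.0mm}C_i$ meaning value $0$ and the head on the first $b$ of $\mathcal{T}_i$ meaning the digit is full. Each time $\mathcal{T}_1$ advances past one $b$, the machine increments this counter: it moves the $\mathcal{T}_2$-head one square right, and whenever a digit head reaches the first $b$ of its tape it resets that head back to $|\hspace{-2.0mm}C_i$ and carries into the next tape, discarding any carry out of $\mathcal{T}_k$. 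This discarded overflow is precisely the wrap-around that makes the counter work modulo $n^{k-1}$.

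When $\mathcal{T}_1$ reaches $\$_1$, all $b$'s have been consumed and the counter holds $(\#_\omega(b))\bmod n^{k-1}$; the machine accepts iff every counter head is back on its own left end-marker $|\hspace{-2.0mm}C_i$, i.e.\ iff the count is a multiple of $n^{k-1}$, which is exactly membership in the language.

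For the running time, write $m=\#_\omega(b)$. The increment move on $\mathcal{T}_2$ costs one step per $b$, contributing $O(m)$. A reset (carry) on digit $i$ moves its head left by at most $n$ squares, but digit $i$ resets only once every $n^{i-1}$ increments, so its total contribution is $O\!\big(\frac{m}{n^{i-1}}\cdot n\big)$; the least significant digit therefore costs $O(m)$ and higher digits cost geometrically less, so summing over the constantly many digits keeps the whole cost $O(m)=O(|\omega|)$. Hence the machine runs in linear time. The only point requiring genuine care is this amortized accounting of the carry-resets (each of which is $O(n)$ rather than $O(1)$), together with the small-$k$ and $n=0$ boundary cases; everything else is a direct specialization of the previous construction.
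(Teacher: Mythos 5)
Your proposal is correct and follows essentially the same route as the paper: reuse the $(k-1)$-digit $n$-radix counter on tapes $\mathcal{T}_2,\dots,\mathcal{T}_k$ and, instead of rejecting when the most significant digit overflows, wrap it around so that the counter records the number of $b$'s modulo $n^{k-1}$, accepting exactly when this residue is zero. Your explicit amortized accounting of the carry-resets and the treatment of the $k=1$ and $n=0$ edge cases are details the paper leaves implicit, but the underlying construction is identical.
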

\begin{proof} Let $n_k$ be the number of symbol $a$ having been scanned by
$\mathcal{T}_k$.  When $n_k$ reaches $n$, and if there are
 symbols $b$ in $\mathcal{T}_1$ which have not been scanned, we do not reject the input, but
return tape head of  $\mathcal{T}_k$ to \ $|\hspace{-2.0mm}C_k$
and go on. The other details are the same as $k$TFA recognizing
language $\{a^{n}b^{n^{k-1}}\mid n\in \mathbf{N}\}$.
\end{proof}

\begin{Th}
For any $\epsilon>0$, there is a $k$TQCFA $\mathcal{M}$ that
accepts any $\omega \in \{a^{n}b^{n^{k}}\mid n\in \mathbf{N}\}$
with certainly, rejects any $\omega \notin \{a^{n}b^{n^{k}}\mid
n\in \mathbf{N}\}$ with probability at least $1-\epsilon$ and
halts in polynomial time.
\end{Th}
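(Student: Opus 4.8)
The plan is to generalize the construction of Theorem~\ref{th1} from two tapes to $k$ tapes, replacing the length-$n$ ``block'' of $b$'s used there by a block of length $n^{k-1}$. First I would dispose of the trivial input $\omega=\lambda$ (accept) and then run the classical $k$-tape machine of Corollary~\ref{co1} to verify, in linear time, that $\omega\in\{a^{n}b^{rn^{k-1}}\mid n,r\in\mathbf{N}\}$, rejecting otherwise. This filters out ill-formed inputs and, more importantly, supplies for free a classical signal that fires each time a full block of $n^{k-1}$ consecutive $b$'s has been consumed: the value $n$ is stored implicitly as the length of the $a$-segment, and tapes $\mathcal{T}_2,\dots,\mathcal{T}_k$ act as a $(k-1)$-digit base-$n$ counter (exactly as in the $(k+1)$TFA construction), whose top-digit overflow is precisely the block-completion signal.

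The quantum component then uses the same two states $|q_0\rangle,|q_1\rangle$ and the same rotation $U_\alpha$ by $\alpha=\sqrt2\,\pi$ as in Theorem~\ref{th1}. Starting from $|q_0\rangle$, I apply $U_\alpha$ once for each symbol $a$ scanned on $\mathcal{T}_1$ (total rotation $n\alpha$) and $U_{-\alpha}$ once for each block-completion signal emitted while consuming the $b$'s (total rotation $-r\alpha$, where $r=m/n^{k-1}$). The net rotation is $(n-r)\sqrt2\,\pi$, so after measuring, if $r=n$ — equivalently $m=n^{k}$ — the state returns to $|q_0\rangle$ with certainty and the machine never rejects at this step, whereas if $r\neq n$ the irrationality of $\sqrt2$ forces a nonzero $|q_1\rangle$-amplitude, and the argument of the first lemma after Theorem~\ref{th1} gives a rejection probability of at least $1/2(n-r)^2$. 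As before, this check is repeated and the machine rejects the moment it observes $|q_1\rangle$.

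To force halting in polynomial time I would reuse the random-walk-with-coin-flips acceptance subroutine, but now counting one ``step'' as the scanning of an entire block of $n^{k-1}$ $b$'s (again measured by the counter tapes), so that the walk runs on a line of about $n+r$ block-positions and reaches the accepting end with probability $\sim 1/(n+r+1)$; repeating it twice and appending $l$ independent coin flips as in Theorem~\ref{th1} yields a per-round acceptance probability $1/2^{l}(n+r+1)^2$. Choosing $l=1+\lceil\log\varepsilon\rceil$ gives $P_a\le \varepsilon/2(n+r+1)^2$ while $P_r\ge 1/2(n-r)^2$, and the same geometric-series computation as in Theorem~\ref{th1} shows that for $\omega\notin L$ the machine rejects with probability exceeding $1-\varepsilon$, while for $\omega=a^nb^{n^k}$ it never rejects and accepts with certainty. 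The time bound stays polynomial: each round costs $\mathbf{O}(n^{k})$ for the scans and counter updates, the walk contributes an $\mathbf{O}(n^2)$ factor, and the expected number of rounds is polynomial in $n$, hence polynomial in the input length $n+n^{k}$.

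I expect the main obstacle to be the bookkeeping that keeps the $(k-1)$-digit base-$n$ counter synchronized across the three phases — the rotation pass, the measurement, and the random walk — since each phase must agree on exactly when a block of $n^{k-1}$ $b$'s begins and ends, must reset every counter tape to $|\hspace{-2.0mm}C_i$ between passes, and must apply $U_{-\alpha}$ exactly once per block; an off-by-one block count would shift the net rotation and destroy the perfect cancellation in the $r=n$ case. Checking that these carry-and-reset operations cost only $\mathbf{O}(n^{k-1})$ per block, so that the polynomial running-time estimate survives, is the technically delicate part, whereas the probabilistic estimates themselves are identical to those already established for $L_{square}$.
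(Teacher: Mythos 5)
Your proposal matches the paper's proof: the paper likewise first applies Corollary~\ref{co1} to verify membership in $\{a^{n}b^{rn^{k-1}}\mid n,r\in\mathbf{N}\}$, then reuses the Theorem~\ref{th1} machinery to check $r=n$, with the only modification being that the random walk treats the scanning of $n^{k-1}$ symbols $b$ as a single step. Your write-up simply fills in more of the details that the paper omits.
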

\begin{proof} In terms of Colollary~\ref{co1}, we can
check whether $w$ is of the form  $a^{n}b^{rn^{k-1}}$ in linear
time. Then we use the similar method as in Theorem~\ref{th1} to
verify $r=n$.
 The process is similar to that in Theorem~\ref{th1}, except the random
walk. $k$TQCFA will take scanning $n^{k-1}$ symbols $b$ as one
step in the random walk. We omit the details here.
\end{proof}

\section{Conclusions  }
In this paper, we have studied 2TFA and proved several languages
can be recognized by 2TFA. Augmenting with a  quantum component of
constant size, we have proposed a new computing model called
2TQCFA, and we have proved that $L_{square}=\{a^{n}b^{n^{2}}\mid
n\in \mathbf{N}\}$ can be recognized by 2TQCFA in polynomial time
with one-sided error. Furthermore, we have proposed $m$TFA and
$M$TQCFA and proved that $\{a^{n}b^{n^{k}}\mid n\in \mathbf{N}\}$
can be recognized by $k$TQCFA.

 As we know, 2QCFA are more
powerful than 2DFA. Thus, in the future we would like to consider
this question: Are 2TQCFA more powerful than 2TFA? Furthermore,
are kTQCFA more powerful than kTFA?  We have proved that all
languages which have been shown to be recognized by 2QCFA or 2QFA
can also be recognized by 2TFA in this paper. It is natural to ask
whether or not there is any language which can be recognized by
2QCFA or 2QFA but can not be recognized by 2TFA?

\subsubsection*{Acknowledgments}
This work is supported in
part by the National Natural Science Foundation (Nos.
60873055, 61073054), the
Natural Science Foundation of Guangdong Province of China (No. 10251027501000004), the Fundamental Research Funds for the Central Universities (Nos. 10lgzd12,11lgpy36), the Research
Foundation for the Doctoral Program of Higher School of Ministry
of Education (Nos. 20100171110042, 20100171120051),  the China Postdoctoral Science Foundation project (Nos. 20090460808, 201003375),
and the project of  SQIG at IT, funded by FCT and EU FEDER projects
Quantlog
POCI/MAT/55796/2004 and
QSec PTDC/EIA/67661/2006, IT Project QuantTel, NoE Euro-NF, and the SQIG LAP initiative.

\nocite{*}
\bibliographystyle{fundam}

\end{document}